\documentclass[sigconf]{acmart}
\usepackage{amsmath}
\usepackage{amsthm}
\usepackage{thmtools}
\usepackage[ruled,vlined,linesnumbered]{algorithm2e}
\usepackage{color}
\usepackage{verbatim}
\usepackage{cleveref}
\usepackage{enumitem}
\theoremstyle{definition}

\newtheorem{theorem}{Theorem}
\newtheorem{corollary}[theorem]{Corollary}
\newtheorem{lemma}[theorem]{Lemma}
\newtheorem{definition}[theorem]{Definition}

\newlength{\protowidth}
\newcommand{\pprotocol}[5]{
{\begin{figure*}[#4]
\begin{center}
\setlength{\protowidth}{14.9cm}
\addtolength{\protowidth}{-2\intextsep}

\fbox{
        \small
        \hbox{\quad
        \begin{minipage}{\protowidth}
    \begin{center}
    {\bf #1}
    \end{center}
        #5
        \end{minipage}
        \quad}
        }
        \caption{\label{#3} #2}
\end{center}
\vspace{-4ex}
\end{figure*}
} }

\newcommand{\ignore}[1]{}
\newcommand{\pk}{\mathsf{pk}}
\newcommand{\sk}{\mathsf{sk}}
\newcommand{\sign}{\mathsf{Sign}}
\newcommand{\ver}{\mathsf{Verify}}
\newcommand{\sig}[2]{\left<#1\right>_#2}

\newcommand{\HS}{\mathcal{H}}
\newcommand{\SK}{\mathcal{S}}
\newcommand{\minP}{P_\mathsf{min}}
\newcommand{\maxP}{P_\mathsf{max}}
\newcommand{\CB}{\mathsf{CB}}
\newcommand{\TCB}{\mathsf{TCB}}
\newcommand{\CPS}{\mathsf{CPS}}
\newcommand{\APA}{\mathsf{APA}}

\newcommand{\Ex}{\mathsf{Ex}}
\newcommand{\TEx}{\widetilde{\mathsf{Ex}}}

\AtBeginDocument{%
  \providecommand\BibTeX{{%
    \normalfont B\kern-0.5em{\scshape i\kern-0.25em b}\kern-0.8em\TeX}}}

%
\copyrightyear{2022}
\acmYear{2022}
\setcopyright{rightsretained}
\acmConference[PODC '22]{Proceedings of the 2022 ACM Symposium on Principles of Distributed Computing}{July 25--29, 2022}{Salerno, Italy}
\acmBooktitle{Proceedings of the 2022 ACM Symposium on Principles of Distributed Computing (PODC '22), July 25--29, 2022, Salerno, Italy}\acmDOI{10.1145/3519270.3538444}
\acmISBN{978-1-4503-9262-4/22/07}

\acmConference[PODC '22]{ACM Symposium on Principles of Distributed Computing}{July 25--29, 2022}{Salerno, Italy}

\begin{CCSXML}
<ccs2012>
<concept>
<concept_id>10003752.10003809.10010172</concept_id>
<concept_desc>Theory of computation~Distributed algorithms</concept_desc>
<concept_significance>500</concept_significance>
</concept>
</ccs2012>
\end{CCSXML}

\ccsdesc[500]{Theory of computation~Distributed algorithms}

\keywords{Clock Synchronization, Digital Signatures, Fault Tolerance} 

\title{Optimal Clock Synchronization with Signatures}
\titlenote{This project has received funding from the European Research Council (ERC) under the European Union's Horizon 2020 research and innovation programme (grant agreement 716562).}

\begin{document}

\author{Christoph Lenzen}
\email{lenzen@cispa.de}
\affiliation{\institution{CISPA Helmholtz Center for Information Security}
\country{Germany}}
\author{Julian Loss}
\email{loss@cispa.de}
\affiliation{\institution{CISPA Helmholtz Center for Information Security}
\country{Germany}}

\begin{abstract}
Cryptographic signatures can be used to increase the resilience of distributed systems against adversarial attacks, by increasing the number of faulty parties that can be tolerated. While this is well-studied for consensus, it has been underexplored in the context of fault-tolerant clock synchronization, even in fully connected systems. Here, the honest parties of an $n$-node system are required to compute output clocks of small skew (i.e., phase offset) despite local clock rates varying between $1$ and $\vartheta>1$, end-to-end communication delays varying between $d-u$ and $d$, and the interference from malicious parties. Known algorithms with (trivially optimal) resilience of $\lceil n/2\rceil-1$ improve over the tight bound of $\lceil n/3\rceil-1$ holding without signatures for \emph{any} skew bound~\cite{DHS84,ST85}, but incur skew $d$~\cite{ADDNR19} or $\Omega(n(u+(\vartheta-1)d))$~\cite{LM84}. Since typically $d\gg u$ and $\vartheta-1\ll 1$, this is far from the lower bound of $u+(\vartheta-1)d$ that applies even in the fault-free case~\cite{BW01}.

We prove tight bounds of $\Theta(u+(\vartheta-1)d)$ on the skew for the resilience range from $\lceil n/3\rceil$ to $\lceil n/2\rceil-1$. Our algorithm is, granted that the adversary cannot forge signatures, deterministic. Our lower bound holds even if clocks are initially perfectly synchronized, message delays between honest nodes are known, $\vartheta$ is arbitrarily close to one, and the synchronization algorithm is randomized. This has crucial implications for network designers that seek to leverage signatures for providing more robust time. In contrast to the setting without signatures, they must ensure that message delay is at least $d-u$, even on links with one faulty endpoint.
\end{abstract}

\maketitle

\section{Introduction}
Synchronizing clocks is a key task in distributed systems, which has received extensive attention over the years.
Given that distributed systems are prone to faults and attacks, a sizeable fraction of this literature is dedicated to studying \emph{fault-tolerant} clock synchronization.
Under faults, the question of whether clocks can be synchronized has a fundamental impact:
A message passing system prone to crash faults can simulate synchronous execution if and only if communication delay satisfies a known bound and time can be locally approximately measured.
If either assumption fails to hold, the FLP proof of impossibility of deterministic consensus applies~\cite{FLP85}.
Conversely, if both assumptions hold, a network synchronizer~\cite{A85} can be implemented by detecting crash faults via timeout.
At the same time, these conditions are necessary and sufficient to compute logical clocks of bounded \emph{skew,} i.e., bounded maximum difference of concurrent clock readings, on each connected component of the network.

Naturally, assuming crash faults is too optimistic in practice.
However, for any less benign fault model, the situation is similar.
On the one hand, from the above we know that it is necessary to have bounded delay and a local sense of the progress of time to be able to simulate synchronous execution.
On the other hand, these assumptions imply that if we are given a synchronizer, (i) it is guaranteed to complete simulation of a round within bounded time and (ii) we can enforce an arbitrarily large minimum duration of each simulated round (by letting nodes sleep for some time).
Intuitively, this allows us to solve the clock synchronization task by using the current round number as ``target'' logical clock value and from this compute logical clocks of bounded skew and rates by interpolation (this intuition is formalized in~\cite[Ch.~9, Sec.~3.3.4]{Lec21}).

In light of the above, the task of clock synchronization can be seen as a more general and precise version of running a synchronizer~\cite{A85}:
\begin{itemize}
  \item Logical clocks with bounded skew and rates of progress can be readily used to implement a synchronizer.
  The maximum duration of a simulated round then is $r (d+\SK)$, where $r$ is the ratio between maximum and minimum clock rate, $d$ is the (maximum) communication delay, and $\SK$ is (the bound on) the skew.
  \item In contrast to a synchronizer, clocks can also be used to coordinate actions in terms of real time.
\end{itemize}
Note that both of these application scenarios share the property that controlling the rates of the computed logical clocks as well as their skew as precisely as possible matters.

To understand how well clocks can be synchronized under realistic faults or even malicious interference, the research community studied the clock synchronization task in the presence of Byzantine (i.e., worst-case) faults.
In striking similarity to consensus~\cite{LSP82}, it is possible to synchronize clocks in a fully connected message passing system with authenticated channels if and only if the number of Byzantine faults is strictly less than one third~\cite{ST85,DHS84}.
This is good news: one can run synchronous consensus on top of clock synchronization, without any negative impact on resilience!
Even better, this is also true with respect to performance: the same asymptotic bounds on skew can be achieved as in the fault-free case, without any loss in resilience~\cite{BW01,LL84}.

Concretely, denote by $d-u$ the \emph{minimum} communication delay, i.e., the time between a message being sent and the receiving node completing to process it is between $d-u$ and $d$, and let $\vartheta>1$ be the maximum rate of the local reference clocks, whose minimum rate we normalize to $1$.
The main result of~\cite{LL84} can then be read as saying that $\SK \in O(u+(\vartheta-1)d)$ and $r\in 1 + O(\vartheta-1)$ can be achieved, so long as $\vartheta<\vartheta_0$ for a constant $\vartheta_0$ and strictly less than one third of the nodes are faulty, cf.~\cite{KL18}.
This means that using the computed clocks to simulate synchronous execution, each simulated round takes $r(d+\SK)\in d+O(u+(\vartheta-1)d)$ time.
Given that in practice $u\ll d$ and $\vartheta-1 \ll 1$, we have that $d+O(u+(\vartheta-1)d)\approx d$.
In other words, synchrony can be simulated with negligible overhead in time!

For these theoretical results to be of practical value, it is paramount to minimize the overheads incurred by achieving fault-tolerance.
While the resilience bound is tight under the assumption of authenticated channels, it is a well-known result that when \emph{messages} are authenticated, consensus can be achieved in a fully connected system when up to $\lceil n/2\rceil-1$ nodes are faulty~\cite{DS83}.
Indeed, also when synchronizing clocks the resilience can be boosted to $\lceil n/2\rceil-1$ by authenticating broadcasts~\cite{ST85,HSSD84}.
However, these algorithms have skew $\Theta(d)\gg u$. Using signature-based consensus, in~\cite{LM84} optimal resilience is achieved with skew $O(n(u+(\vartheta-1)d))$, where $n$ is the number of nodes; replacing the consensus routine with a faster one could reduce, but not completely eliminate the dependence on $n$. This begs the question
\begin{quote}
\emph{``Which skew can be obtained with signatures at optimal resilience $\lceil n/2\rceil-1$?''}
\end{quote}

\subsection*{Our Contribution}
In this work, we show that the answer to this question is nuanced.
To obtain an asymptotically optimal upper bound on the achievable skew, we set out on the track that readers familiar with the literature might expect.
The algorithm from~\cite{LL84} is based on simulating iterations of synchronous \emph{approximate agreement.} 
\begin{definition}[Approximate Agreement] Let $\Pi$ be protocol executed among $n$ nodes where each node $v$ holds an input $r_v\in\mathbb{R}$ and nodes terminate upon generating an output $o_v\in\mathbb{R}$. Denoting by $\HS$ the set of honest nodes, we say that $\Pi$ is a \emph{$(\ell,\epsilon,f)$-secure protocol for approximate agreement} if the following properties hold whenever at most $f$ nodes are corrupted and $\max_{v,w\in\HS}\{r_v-r_w\}\leq\ell$:
\begin{itemize}
\item $\epsilon$-Consistency: $\max_{v,w\in\HS}\{o_v-o_w\}\leq\epsilon$.
\item Validity: For all $v\in\HS$, $\min_{w\in\HS}\{r_w\}\leq o_v\leq\max_{w\in\HS}\{r_w\}$.
\end{itemize}
\end{definition}
Without signatures, this task can be solved if and only if fewer than one third of the nodes are faulty~\cite{DLPSW86,FLM85}.
The reason is that faulty nodes might claim different inputs to different nodes.
Signatures can overcome this by being able to prove to others which input a sender claimed.
While achieving consensus on inputs might take a non-constant number of rounds even with randomization, for approximate agreement it is sufficient that at most one value from each sender is accepted by correct nodes.
This is easily achieved by outputting ``$\bot$'' -- no value -- if another node proved to receive a conflicting value.
The resulting relaxed variant of reliable broadcast is referred to as \emph{crusader broadcast} in the literature~\cite{D82}.

We show that when communication is by crusader broadcast, approximate agreement can be solved in a logarithmic number of rounds with resilience $\lceil n/2\rceil-1$.
\begin{corollary}\label{cor:apa} There is an $(\ell,\epsilon,\lceil n/2\rceil-1)$-secure protocol for approximate agreement running in $2\lceil\log(\ell/\epsilon)\rceil$ rounds.
\end{corollary}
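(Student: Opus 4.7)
The plan is to implement approximate agreement by running $\lceil \log(\ell/\epsilon)\rceil$ iterations of a halving procedure, each consisting of one crusader broadcast (CB) of every node's current value followed by a local update. Since a CB takes 2 rounds, this matches the target round complexity of $2\lceil \log(\ell/\epsilon)\rceil$. Writing $f=\lceil n/2\rceil-1$, in iteration $i$, each honest $v$ initiates a CB of its current value. After the CB, $v$ holds a multiset $V_v$ consisting of the non-$\bot$ values it received (one per sender). Node $v$ then updates its value to the midpoint of the $(f+1)$-th smallest and the $(f+1)$-th largest entries of $V_v$; this is well defined since $|V_v|\geq n-f\geq f+1$.

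Two properties of CB drive the analysis: (i) for every honest sender $s$, every honest receiver outputs $s$'s value, so all current honest values lie in $V_v$ for every $v\in\HS$; and (ii) for every faulty sender $s$, any two honest receivers that do not output $\bot$ output the same value. Writing $m=\min_{w\in\HS}r_w$ and $M=\max_{w\in\HS}r_w$ for the current honest bounds, property (i) yields validity of each iteration via a counting argument: since at most $f$ entries of $V_v$ originate from faulty senders, any window of $f+1$ consecutive ranks must contain an honest entry, so the two chosen order statistics (and hence their midpoint) lie in $[m,M]$. Property (ii) is what enables the resilience $\lceil n/2\rceil-1$: it prevents the adversary from injecting different extreme values into $V_v$ and $V_w$ for two honest $v,w$, so any faulty value appearing in both multisets is identical, and the only asymmetry between $V_v$ and $V_w$ is which faulty sender's value is present in one but not the other.

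The main obstacle I anticipate is the halving analysis at the boundary of the resilience bound. When $n=2f+1$, the gap between the guaranteed number of honest values in $V_v$ ($f+1$) and the trimming budget ($f$ per side) leaves no slack, and ``missing'' faulty entries can shift the ranks of $V_v$ relative to $V_w$ in subtle ways. Property (ii) must be combined with a careful accounting of which honest values land at which ranks to show that the two midpoints differ by at most $(M-m)/2$ rather than merely lying in $[m,M]$. Iterating this contraction $\lceil \log(\ell/\epsilon)\rceil$ times yields $\epsilon$-consistency and, combined with validity, completes the proof of the corollary.
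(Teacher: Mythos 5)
Your overall plan (iterate a two-round crusader-broadcast-based halving step $\lceil\log(\ell/\epsilon)\rceil$ times) matches the paper's, but your local update rule is incorrect, and the obstacle you flag at the end is not merely an analysis difficulty --- the halving step genuinely fails for your rule. Taking the midpoint of the $(f+1)$-th smallest and $(f+1)$-th largest entries of $V_v$ ignores the information carried by the $\bot$ outputs: each $\bot$ certifies that its sender is faulty, so only $f-b_v$ of the remaining entries of $V_v$ can be adversarial (where $b_v$ is the number of $\bot$s $v$ received), and the trimming must shrink accordingly. With a fixed trim of $f$, asymmetric $\bot$ patterns push the selected order statistics of two honest nodes past each other. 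Concretely, take $n=6$, $f=2$, honest values $0,0,1,1$, and faulty dealers $x,y$ such that $x$'s broadcast yields $10$ at $v$ and $\bot$ at $w$, while $y$'s yields $-10$ at $w$ and $\bot$ at $v$ (this respects crusader consistency). Then $V_v=\{0,0,1,1,10\}$, whose third-smallest and third-largest entries both equal $1$, so $o_v=1$; symmetrically $V_w=\{-10,0,0,1,1\}$ gives $o_w=0$. The output range equals the input range $\ell=1$ instead of $\ell/2$, so no contraction occurs and $\epsilon$-consistency cannot be reached in any number of iterations.

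The paper's Algorithm~$\APA$ instead discards the lowest and highest $f-b$ non-$\bot$ values and outputs the midpoint of the interval spanned by the survivors. (For odd $n=2f+1$ your rule happens to select the same pair of ranks as the paper's, which is why small examples look fine; the discrepancy arises for even $n=2f+2$.) Correctness then hinges on \Cref{lemma:median}: all honest nodes' retained intervals share a common point, proved by induction on the number of $\bot$s via the monotonicity statement of \Cref{lemma:removebots}. In the example above the paper's retained intervals are both $[0,1]$, whereas your selected ``intervals'' degenerate to the disjoint points $\{1\}$ and $\{0\}$. So the fix is to change the update rule to account for received $\bot$s, not to refine the rank accounting for the rule as stated.
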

In contrast to standard approximate agreement, for synchronization purposes the crucial ``content'' of the messages is their \emph{timing.}
With message delays between $d-u$ and $d$, we can simulate crusader broadcast in a timed fashion, which allows each node to estimate the offset between its own and other correct nodes in the system with an error of $O(u+(\vartheta-1)d)$.
The property that faulty nodes communicate no different values then is relaxed, too, in that if two correct nodes accept a broadcast from the same sender, then their respective reception times agree up to $O(u+(\vartheta-1)d)$.

With this subroutine in place, we adapt the algorithm from~\cite{LL84} by replacing plain broadcasts by (simulated) crusader broadcasts and adjusting clocks in accordance with the more resilient approximate agreement algorithm with signatures.
The algorithm solves so-called \emph{pulse synchronization}, which (up to minor order terms) is equivalent to computing logical clocks of small skew and bounded rates at all times, cf.~\cite[Ch.~9, Sec.~3.3.3 and 3.3.4]{Lec21}.
\begin{definition}[Pulse Synchronization] Let $\Pi$ be a protocol executed among $n$ nodes, where the set of honest nodes is $\HS$. We say that $\Pi$ is an \emph{$f$-secure protocol for pulse synchronization with skew $\SK$, minimum period $\minP>0$, and maximum period $\maxP$} if the following properties hold whenever at most $f$ nodes are corrupted:
\begin{itemize}
\item Liveness: for all $i\in\mathbb{N}_{>0}$ and each node $v\in\HS$, $v$ outputs pulse $i$ exactly once. We denote $p_{v,i}$ as the time where $v$ outputs its $i$th pulse.
\item $\SK$-bounded skew: $\sup_{i\in\mathbb{N}_{>0},v,w\in\HS}\{|p_{v,i}-p_{w,i}|\}\leq\SK$
\item $\minP$-minimum period: $$\inf_{i\in\mathbb{N}}\{\min_{v\in\HS}\{p_{v,i+1}\}-\max_{v\in\HS}\{p_{v,i}\}\}\geq\minP.$$
\item $\maxP$-maximum period: $$\sup_{i\in\mathbb{N}_{>0}}\{\max_{v\in\HS}\{p_{v,i+1}\}-\min_{v\in\HS}\{p_{v,i}\}\}\leq\maxP.$$
\end{itemize}
In case of a randomized algorithm, the skew bound is allowed to depend on the randomness of the algorithm; $\SK$ is then defined as the expected worst-case value, where the expectation is taken over the randomness of the algorithm.
\end{definition}
\begin{restatable}{corollary}{sync}\label{cor:sync} If $\vartheta\le 1.11$, there are choices $T\in \Theta(d)$ and $S\in \Theta(u+(\vartheta-1)d)$ such that Algorithm~\ref{fig:cps} is a $(\lceil n/2\rceil-1)$-secure pulse synchronization protocol with skew $S$, minimum period $P_{\min}\in \Theta(d)$ and maximum period $P_{\max}\in P_{\min}+\Theta(S)$.
\end{restatable}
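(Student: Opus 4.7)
The plan is to import the analysis of~\cite{LL84} with two local substitutions: every plain broadcast is replaced by a timed crusader broadcast (obtained as discussed above), and every approximate-agreement step is replaced by the signed variant from Corollary~\ref{cor:apa}. I would prove the claim by induction on the pulse index $i$, maintaining the invariant that the skew $\sigma_i:=\sup_{v,w\in\HS}|p_{v,i}-p_{w,i}|$ is bounded by some $S\in\Theta(u+(\vartheta-1)d)$ once the algorithm has stabilized, and that consecutive pulses are separated by at least $\minP\in\Theta(d)$ on every honest local time line.

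For the inductive step, I would show that upon generating its $i$-th pulse, each honest node initiates a timed crusader broadcast keyed to that event. Because end-to-end delays lie in $[d-u,d]$ and local clocks accrue at most a factor $(\vartheta-1)$ of drift over the $\Theta(d)$-long broadcast window, any two honest nodes assign the same honest sender's pulse a reception timestamp that agrees up to an additive $\eta:=c_1(u+(\vartheta-1)d)$; the crusader-broadcast property extends this to Byzantine senders whose signed claim is accepted. Honest nodes then feed this vector of timestamps (on which they agree up to $\eta$) into a constant number of rounds of the $(\lceil n/2\rceil-1)$-resilient approximate-agreement procedure, and use its output (offset by the nominal inter-pulse delay $T\in\Theta(d)$) as the target time for pulse $i+1$. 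The halving guarantee of Corollary~\ref{cor:apa} together with the $\eta$-error from timestamping yields a recurrence of the form
\[
\sigma_{i+1}\le \tfrac{1}{2}\sigma_i+c_2(u+(\vartheta-1)d),
\]
whose unique stable fixed point lies in $\Theta(u+(\vartheta-1)d)$; the numerical bound $\vartheta\le 1.11$ enters here to ensure that the effective contraction factor $\tfrac{1}{2}+O(\vartheta-1)$, once the drift accumulated over the $\Theta(d)$-long convergence phase is accounted for, remains strictly below $1$ with enough slack to absorb $c_2\eta$ into a fixed point of the claimed asymptotic order.

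Having established the skew bound, the period bounds follow by fixing $T$ to strictly exceed the wall-clock time needed, on every honest node, for the crusader broadcast, the constant-round approximate agreement, and the wait until the locally computed target time; bookkeeping the $\vartheta$-drift during this window yields $\minP\in\Theta(d)$ and $\maxP-\minP\in\Theta(S)$. The main obstacle will be making the error-propagation argument fully rigorous: one must verify that Byzantine senders whose signed claims are received at consistent timestamps can be treated as \emph{effectively honest} for the purposes of the approximate-agreement recurrence (so that the validity clause of Corollary~\ref{cor:apa} suffices to prevent them from dragging outputs outside the honest spread), and that the constants $c_1,c_2$ together with the $\vartheta\le 1.11$ threshold do leave enough room for the recurrence to be a contraction in every parameter regime covered by the statement.
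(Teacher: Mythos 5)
Your proposal is correct and follows essentially the same route as the paper: timed crusader broadcasts yield offset estimates with validity and consistency error $O(u+(\vartheta-1)d)$, a single discard-and-midpoint step per pulse is reinterpreted as an approximate-agreement iteration in which faulty senders with accepted timestamps act as virtual honest inputs (the paper's \Cref{lemma:apa_sim}), and the resulting recurrence $\|\vec{p}^{r+1}\|\le\|\vec{p}^r\|/2+O(u+(\vartheta-1)d)+O((\vartheta-1)T)$ is closed at its fixed point, with the $\vartheta\le 1.11$ threshold coming from requiring the $S$-dependent terms to leave the contraction strictly below $1$. The only cosmetic differences are that the paper runs exactly one halving step per pulse rather than ``a constant number of rounds,'' and that the skew invariant holds from pulse $1$ onward (the initial hardware-clock offset is assumed to be at most the fixed-point value $S$) rather than only ``once the algorithm has stabilized.''
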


Since this skew bound is asymptotically optimal even without faults~\cite{BW01}, at first glance it might appear that this settles our above question.
However, there is a crucial difference to the signature-free setting.
For the above result, it is necessary that also \emph{faulty} nodes must obey the minimum message delay of $d-u$, both when receiving and sending messages.
Otherwise, they could obtain and send a signature used by a correct sender in a crusader broadcast so early that correct nodes reject the sender's broadcast.
This is in stark contrast to the algorithm from~\cite{LL84}, for which faulty nodes can have full information of the system state at all times, including the future!

Given that it might be very challenging or even impossible for system designers to guarantee that an attacker must obey a minimum communication delay of $d-u$ for $u\ll d$, we need to determine whether this limitation is inherent.
Perhaps surprisingly, we prove that this is indeed the case, by providing a matching lower bound.
If either messages to or from faulty nodes have delays from $[d-\tilde{u},d]$ for some $\tilde{u}\in [u,d]$, then we can prove a lower bound of $\Omega(\tilde{u})$ on the skew, regardless of $u$.
\begin{restatable}{theorem}{lb}\label{thm:lb} Let $n\geq 3$ and $\Pi$ be an $\lceil n/3\rceil$-secure protocol for pulse synchronization with skew $S$. Then $\mathbb{E}[S]\geq 2\tilde{u}/3$.
\end{restatable}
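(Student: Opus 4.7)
The plan is to prove the lower bound via a three-scenario indistinguishability (``shifting'') argument in the classical Lundelius-Lynch style, adapted to the Byzantine setting with the widened faulty-adjacent delay range $[d-\tilde u, d]$.

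First, partition the $n$ nodes into three disjoint groups $A$, $B$, $C$, each of size at most $\lceil n/3\rceil$; this is feasible for $n\geq 3$ and matches the tolerated corruption budget. Construct three executions $E_1,E_2,E_3$ of $\Pi$: in $E_i$, the $i$th group is Byzantine-faulty and the other two are honest. For a randomized protocol, couple all three executions to the same realization of its random tape, so the argument applies pointwise and, by linearity, in expectation. In each scenario, the Byzantine group internally runs the honest protocol but adjusts the timings of its own signed messages and of the forwarded honest-signed messages on its incident links, staying within the $[d-\tilde u,d]$ delay range, so as to make the local view of one honest group identical to its view in the adjacent scenario. Crucially, the Byzantine group never forges any honest signature: it only uses its own signing keys and its full control of its incident-link timings.

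Arrange the construction cyclically so that $C$'s local view in $E_1$ matches its view in $E_2$, $A$'s local view in $E_2$ matches its view in $E_3$, and $B$'s local view in $E_3$ matches its view in $E_1$. Each transition hides a real-time shift of $\tilde u$ on the Byzantine-adjacent delays; the shifts accumulate to $2\tilde u$ around the three-cycle. Letting $p_{v,i}$ denote $v$'s pulse time in $E_i$, the construction yields identities of the form $p_{C,1}=p_{C,2}$, $p_{A,3}=p_{A,2}+\tilde u$, and $p_{B,1}=p_{B,3}+\tilde u$. Applying the $S$-skew guarantee to the two honest groups of each execution, namely $|p_{B,1}-p_{C,1}|\leq S_1$, $|p_{A,2}-p_{C,2}|\leq S_2$, and $|p_{A,3}-p_{B,3}|\leq S_3$, substituting the identities, and invoking the triangle inequality yields $2\tilde u\leq S_1+S_2+S_3$. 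Since $\mathbb{E}[S]$ upper-bounds each $\mathbb{E}[S_i]$ (the protocol's stated skew bounds hold against every adversary strategy), we conclude $\mathbb{E}[S]\geq 2\tilde u/3$.

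The main obstacle will be the concrete construction of the three scenarios: one must verify that in each scenario every real-time link delay lies in its permitted range (the honest-to-honest range $[d-u,d]$ is strict, while faulty-adjacent links enjoy the wider $[d-\tilde u,d]$), and that the Byzantine group can indeed simulate the shifted honest behavior consistently across the cycle. The widened range on faulty-adjacent links is precisely what enables an $\tilde u$-shift per transition, so the budgets add up to the cyclic total of $2\tilde u$. Once the delay bookkeeping is in order and the Byzantine simulation is shown to be feasible (which requires only the standard signature-unforgeability assumption), the cyclic triangle-inequality computation above goes through routinely.
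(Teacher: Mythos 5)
Your overall strategy coincides with the paper's: three cyclically arranged executions, the $\tilde u$ slack on faulty-adjacent links used to hide hardware-clock shifts, a reduction from general $n$ to three simulated super-nodes of size at most $\lceil n/3\rceil$, and a closing triangle-inequality computation giving $2\tilde u\le 3S$. (The paper uses symmetric shifts of $2\tilde u/3$ per transition rather than your $0,\tilde u,\tilde u$ assignment, but the cyclic sum is the same, and your treatment of randomization by fixing the tape and averaging over the adversary's uniform choice of corrupted group matches the paper's Yao-style argument.)

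The genuine gap is precisely the step you label ``the main obstacle'' and then declare routine. In the signature model the adversary is causally constrained: it cannot emit a message that depends on an honest signature before some faulty node has actually received that signature. So it does not suffice to write down clock functions and delays that are formally in range and then argue indistinguishability of honest views; one must additionally prove that the faulty group can \emph{realize} the prescribed message schedule, i.e., that every honest-signed message $m'$ on which a required faulty message $\tilde m$ depends reaches the faulty node before the real time at which $\tilde m$ must be sent. This is the entire technical content of the paper's \Cref{lemma:lbinduction1,lemma:lbinduction2}: an induction over the number of messages sent by faulty nodes, a backward-tracing case analysis showing each needed signature arrives by time $\bar h-d+\tilde u/3$, and a limit argument (relying on the model assumption that only finitely many messages are sent in finite time) to push indistinguishability to all times. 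Unforgeability works \emph{against} you here rather than for you --- it is exactly what could prevent the adversary from echoing a signature early enough --- so asserting feasibility ``from the standard unforgeability assumption'' inverts the difficulty. A secondary issue: your asymmetric shift pattern would have to be re-verified against these timing margins; the paper's symmetric $2\tilde u/3$ shifts are what produce the $\tilde u/3$ slack in each branch of that case analysis.
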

Since a lower bound of $(\vartheta-1)d$ follows from a simple indistinguishability argument even in absence of faults, we hence establish an asymptotically tight bound of $\Theta(\tilde{u}+(\vartheta-1)d)$ on the skew that can be achieved when the number of faults is at least $n/3$. We stress that \Cref{thm:lb} imposes no restrictions on $\Pi$, which might be randomized, holds under perfect initial synchronization, for arbitrarily small $\vartheta-1$ and $u=0$, and our adversary is static, i.e., chooses which nodes to corrupt upfront. In other words, all the typical loopholes one might try to exploit to improve on our upper bound result are unavailable, implying asymptotic optimality of its skew in a strong sense.

\paragraph*{Organization of this article.}
In~\Cref{sec:prelim}, we specify our model and cover some preliminaries, including synchronous approximate agreement with resilience $\lceil n/2\rceil-1$. In~\Cref{sec:cps}, we provide and analyze our Crusader Pulse Synchronization algorithm. \Cref{sec:lower} proves the lower bound. Due to space constraints, discussion of further related work, some proofs, and a synchronous Crusader Broadcast algorithm are deferred to Appendices~\ref{app:further}, \ref{app:proofs}, and Figure~\ref{fig:cb}, respectively.

\section{Preliminaries and Model}\label{sec:prelim}
We consider a network of $n$ nodes connected by pairwise, authenticated channels. An unknown subset of the nodes is faulty or even malicious; we denote by $\HS$ the set of the remaining honest nodes. We also assume that nodes have established a \emph{public key infrastructure (PKI)}. This means that every node $v$ has a \emph{public key $\pk_v$} that all other nodes agree on. Any honest node is also assumed to hold a matching \emph{secret key $\sk_v$} with which it can create a \emph{signature $\sig{m}{v}$} on a message $m$ via $\sig{m}{v}\gets\sign(\sk_v,m)$. 
A signature can be verified via $\ver(\pk_v,\sig{m}{v},m)$, which returns a bit 0 \emph{(invalid)} or 1 \emph{(valid)}.
Therefore, we assume that a signature with respect to $\pk_v$ on any message $m$ is impossible to create without knowledge of $\sk_v$. We also assume \emph{perfect correctness}: for any message $m\in\{0,1\}$, $\ver(\pk_v,\sign(\sk_v,m),m)=1$.

\medskip\noindent\textbf{Assumptions on the Network.} We use a continuous notion of time; thus time takes values in $\mathbb{R}_{\ge 0}$. We assume a fully connected network with known minimum and maximum delays of $d-u$ and $d$, respectively. This means that any message sent to or from an honest node is delivered after at most $d$ time and no faster than $d-u$ time, where $d$ and $u$ are known parameters. We refer to $u$ as the \emph{uncertainty}. When proving our lower bound, we will allow for the possibility that messages to and from faulty nodes might violate the minimum delay bound of $d-u$ and instead satisfy only a weaker bound of $d-\tilde{u}$ for some $\tilde{u}\in [u,d]$. Note that we assume that the network is fault-free, since we are free to map link failures to node failures. Finally, we assume that only finitely many messages are sent in finite time; this must clearly be satisfied in any real-world network and it simplifies our lower bound construction by enabling us to perform induction over the messages sent by the honest parties executing an arbitrary, but fixed algorithm.

\medskip\noindent\textbf{Hardware Clocks and Clock Rate.} Nodes have no access to the ``true'' time $t\in \mathbb{R}_{\ge 0}$. Instead, each node $v$ can measure the progress of time approximately via its \emph{hardware clock.} The hardware clock of node $v$ is modelled by a function $H_v\colon \mathbb{R}_{\ge 0}\longrightarrow \mathbb{R}_{\ge 0}$ which the node can evaluate at any point in time. That is, $H_v$ maps time $t\in\mathbb{R}_{\ge 0}$ to a \emph{local time} $H_v(t)\in\mathbb{R}_{\ge 0}$. We assume that hardware clocks run at rates between $1$ and $\vartheta$ for a known constant $\vartheta>1$, i.e., for all $t'\ge t\in \mathbb{R}_{\ge 0}$, it holds that
\begin{equation*}
t'-t\le H_v(t')-H_v(t)\le \vartheta(t'-t).
\end{equation*}
Furthermore, we assume some degree of initial synchrony, represented by bounding $\max_{v,w\in \HS}\{|H_v(0)-H_w(0)|\}$.
For simplicity, we assume this bound to equal $S$ when showing our upper bound; our lower bound result holds under the assumption of perfect initial synchrony, i.e., $H_v(0)=H_w(0)$ for all $v,w\in \HS$.

\medskip\noindent\textbf{Adversary and Executions.} The adversary is in full control of message delays and hardware clocks within the bounds specified by the model. That is, the adversary specifies arbitrary hardware clock functions and message delays subject to the above constraints. Our algorithm is resilient to any adversary that cannot forge signatures, i.e., it is deterministically correct under this assumption. On the other hand, our lower bound holds for a static adversary that decides which nodes to corrupt upfront. The adversary may use corrupted nodes' secrets to generate signatures for them, but needs to obtain signatures of honest nodes affecting a message it intends to send before it can generate the message. Apart from this restriction, the adversary fully controls the behavior of faulty nodes.

Formally, an \emph{execution} is fully specified by determining $\HS$, $H_v$ for each $v\in \HS$, which messages faulty nodes send and when, and the delays of all messages. An execution is \emph{well-defined}, i.e., conforms with our model, if these parameters specify the above specification and for each message $m$ sent by a faulty node at time $t$, it receives all messages $m'$ containing signatures from honest nodes that $m$ depends on, some faulty node received a message $m''$ with the same signature by time $t$.\footnote{In our lower bound construction, we simply guarantee that the faulty sender of $m$ receives $m'$ by time $t$.}

\medskip\noindent\textbf{Synchronous Execution and Rushing Adversary.}
For the sake of clarity of presentation, we use a classic synchronous model in our description and analysis of our solution to the approximate agreement problem.
That is, computation proceeds in compute-send-receive rounds, and the goal is to complete the task in as few communication rounds as possible.
While working in the synchronous model, we also assume a rushing adversary that can immediately observe honest nodes' messages in any given round and choose its own messages based on them.
However, since our goal is to synchronize clocks in the above model, we will later show how to overcome this assumption in our main synchronization protocol. 

\medskip\noindent\textbf{Crusader Broadcast.}
In the synchronous setting, we assume a subroutine implementing Crusader Broadcast to be available.

\begin{definition}[Crusader Broadcast] Let $\Pi$ be protocol executed among $n$ nodes where a designated dealer $v$ holds an input $b_v\in\{0,1\}$ and nodes terminate upon generating an output $o_v\in\{0,1,\bot\}$. We say that $\Pi$ is an \emph{$f$-secure protocol for crusader agreement} if the following holds whenever at most $f$ nodes are corrupted:
\begin{itemize}
\item Validity: If $v\in\HS$, then $o_w=b_v$ for all $w\in\HS$.
\item Crusader Consistency: If $o_u\in\{0,1\}$ for some $u\in\HS$, then for all $w\in\HS$, $o_w\in\{\bot,o_u\}$.
\end{itemize}
\end{definition}
Figure~\ref{fig:cb} shows Algorithm~$\CB$ implementing crusader broadcast with signatures, whose correctness is shown in~\cite{Dolev82}.

\subsection{Approximate Agreement with Signatures}
In this section, we describe a simple synchronous $2$-round $(\ell,\ell/2,\allowbreak\lceil n/2\rceil-1)$-secure approximate agreement algorithm, where we describe the protocol from the view of node $v$ with input $r_v$ and $\ell$ denotes the initial range of honest nodes' values. Note that a $(\ell,\epsilon,\lceil n/2\rceil-1)$-secure approximate agreement algorithm follows immediately by repeating the algorithm $\lceil\log \ell/\epsilon\rceil$ times, feeding the output of the previous iteration as input into the new instance.

The algorithm is given in~\Cref{fig:aa}. Intuitively, its correctness follows from the facts that (i) nodes always discard enough values to ensure that the remaining ones lie within input range, where any received $\bot$ guarantees that the respective sender is faulty and does not contribute to the list of received non-$\bot$ values and (ii) the intervals spanned by the remaining values must intersect, implying that their midpoints can only be $\ell/2$ apart.

\pprotocol{Algorithm $\APA$}{A synchronous $2$-round $(\ell,\ell/2,f)$-secure algorithm for approximate agreement, where $f=\lceil n/2\rceil-1$.}{fig:aa}{}{
\begin{itemize}
\item $v$ sends $r_v$ to all nodes using (an instance of) Algorithm~$\CB$
\item Denote $r_w$ the value received from $w$ (which might be $\bot$). Determine $o_v$ as follows:
\begin{itemize}
\item Denote $b$ as the number of $\CB$ instances that output $\bot$ in the previous step.
\item Sort all the non-$\bot$ values received via the instances of $\CB$ in the previous step, then discard the lowest $f-b$ and the highest $f-b$ of those values. Denote $I$ the interval spanned by the remaining values.
\item Output the midpoint $o_v$ of $I$.
\end{itemize}
\end{itemize}
}

The first point is immediate from the properties of crusader broadcast. To show the second point, we argue as follows. First, observe that if there are no $\bot$-values received by honest parties, all of them have the same lists and, trivially, retain the same interval. Second, a node receiving a $\bot$-value can only increase the interval it retains, allowing us to extend this statement by induction on the number of $\bot$ values received by honest parties to all executions.

We now formalize this intuition. To this end, we denote as $I_v:=[a_v,b_v]$ the interval spanned by the remaining values received by node $v\in\HS$ in the first step  (i.e., the ones that remain after discarding the highest and lowest values for that iteration). For any execution of the protocol in which some $v\in\HS$ receives at least one $\bot$ value and any $x\in\mathbb{R}$, we denote by $\tilde{I}^{x}_v=[\tilde{a}_v,\tilde{b}_v]$ the interval $I_v$ that results from an alternative execution of the protocol in which $v$ receives identical messages except that one of the $\bot$ values is replaced by $x$.

\begin{lemma}\label{lemma:removebots} For all $x\in\mathbb{R}$ and all $v\in\HS$, $\tilde{I}^{x}_v\subseteq I_v$.
\end{lemma}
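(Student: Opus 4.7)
The plan is to reduce the claim to a purely combinatorial fact about how the trimmed-mean interval changes when one element is inserted into a sorted multiset. Write the non-$\bot$ values $v$ receives in the original execution as the sorted multiset $v_1 \le v_2 \le \cdots \le v_m$ with $m := n-b$, and set $k := f-b$; then by construction $I_v = [v_{k+1},\,v_{m-k}]$. In the alternative execution one $\bot$ is replaced by $x$, so $v$ now has $m+1$ non-$\bot$ values and observes $b-1$ outputs equal to $\bot$. This means the trim parameter in the alternative execution is $k' = f-(b-1) = k+1$, and the new sorted list $v'_1\le\cdots\le v'_{m+1}$ satisfies $m'-k' = m-k$, giving $\tilde{I}^x_v = [v'_{k+2},\,v'_{m-k}]$.

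The task then reduces to establishing the two one-sided bounds
\[
v'_{k+2} \;\ge\; v_{k+1} \qquad\text{and}\qquad v'_{m-k} \;\le\; v_{m-k}.
\]
Both are proved by a short case split on where $x$ falls among $v_1,\ldots,v_m$. For the lower-endpoint bound: if $x \ge v_{k+1}$, then the $k+1$ smallest elements of $\{v_1,\ldots,v_m,x\}$ are exactly $v_1,\ldots,v_{k+1}$, so $v'_{k+2}$ is the minimum of $\{v_{k+2},\ldots,v_m,x\}$, which is at least $v_{k+1}$; otherwise $x < v_{k+1}$, and then $x$ lies among the $k+1$ smallest of the new multiset, which forces $v'_{k+2}=v_{k+1}$. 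The argument for $v'_{m-k}\le v_{m-k}$ is completely symmetric. Combining the two inequalities gives $\tilde{I}^x_v \subseteq I_v$, which is the claim.

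The only real subtlety is bookkeeping: one must be careful that decreasing $b$ by one increases the trim count from $k$ to $k+1$ on each side while the total number of non-$\bot$ values increases by one, so the upper trim index stays at $m-k$; ties between $x$ and some $v_j$ should be broken consistently, but this only affects the identity of the elements at the boundary, not the two endpoint comparisons above. Once these indices line up, the argument is immediate.
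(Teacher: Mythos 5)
Your proposal is correct and follows essentially the same route as the paper: both reduce the claim to the fact that inserting one element into the sorted list shifts each rank by at most one (the paper states this as $\tilde{r}^{j}_v \le r^{j}_v \le \tilde{r}^{j+1}_v$, which you re-derive for the two needed indices via a case split on where $x$ falls), combined with the bookkeeping that the trim count increases from $f-b$ to $f-b+1$ while the list length grows by one. The endpoint comparisons you obtain match the paper's, so the argument goes through.
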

\begin{proof} Fix an arbitrary node $v\in\HS$ receiving at least one $\bot$ value. For $0< b\le f$, denote by $r^1_v,...,r^{n-k}_v$ the $n-b$ non-$\bot$ values that $v$ receives in this iteration and assume they are ordered in ascending fashion. Similarly, denote by $\tilde{r}^1_v,...,\tilde{r}^{n-k+1}_v$ the corresponding values in the alternative execution in which in addition $x$ is received by $v$. Because the second list is identical to the old one except for exactly one additional value, we have for all $1\le j\le n-b$ that $\tilde{r}^{j}_v \le r^{j}_v\le \tilde{r}^{j+1}_v$. In particular,
\begin{equation*}
\tilde{a}_v = \tilde{r}_v^{f-(b-1)+1} \ge r_v^{f-b+1} =a_v
\end{equation*}
and
\begin{equation*}
\tilde{b}_v = \tilde{r}_v^{n-(f-(b-1))} \le r_v^{n-(f-(b-1))} \le r_v^{n-(f-b)}=b_v,
\end{equation*}
i.e., $\tilde{I}^x_v=[\tilde{a}_v,\tilde{b}_v]\subseteq [a_v,b_v]=I_v$.
\end{proof}

\begin{lemma}\label{lemma:median} There exists $x\in \mathbb{R}$ such that $x\in I_v$ for all $v\in\HS$.
\end{lemma}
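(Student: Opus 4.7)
The plan is to reduce to a hypothetical scenario in which every honest node has the identical list of real values (no $\bot$'s); since by Lemma~\ref{lemma:removebots} each such replacement only shrinks the interval $I_v$, any common point in the shrunken intervals will automatically lie in all original $I_v$'s.

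First I would invoke Crusader Consistency for each instance of $\CB$: whenever two honest nodes both receive a non-$\bot$ value from the same sender $s$, those values coincide. This lets me define a canonical value $r_s$ per sender (setting $r_s := 0$ if no honest node received non-$\bot$ from $s$). Then for every $v \in \HS$ I consider the hypothetical list obtained by replacing each $\bot$ that $v$ received from some $s$ with the corresponding $r_s$; let $\tilde{I}_v$ denote the resulting interval, computed by discarding the $f$ smallest and $f$ largest values of the now-$\bot$-free $n$-element list.

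Second I would observe that these hypothetical lists are identical across all $v \in \HS$. For an honest sender $s$ this follows from Validity of $\CB$; for a faulty sender it follows from our definition of $r_s$ together with Crusader Consistency. Hence $\tilde{I}_v$ is one common interval $\tilde{I}$ shared by every honest $v$, and it is non-empty because the number of retained values is $n - 2f \geq 1$ when $f = \lceil n/2 \rceil - 1$ (two values for $n$ even, one for $n$ odd).

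Finally I would apply Lemma~\ref{lemma:removebots} once per $\bot$ that $v$ originally received, chaining the resulting inclusions to conclude $\tilde{I} = \tilde{I}_v \subseteq I_v$ for every $v \in \HS$; any $x \in \tilde{I}$ then lies in every $I_v$, proving the claim. The main subtlety is ensuring that the per-node replacements really yield a common list, which is precisely where Crusader Consistency does the heavy lifting: it is what allows the canonical choice $r_s$ to be well-defined and consistent across all honest receivers, whereas validity alone would only handle the honest senders.
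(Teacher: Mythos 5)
Your proof is correct and is essentially the paper's argument run in reverse: the paper starts from the common $\bot$-free lists and inductively introduces $\bot$'s one at a time, using Lemma~\ref{lemma:removebots} to show that a common point survives each enlargement, whereas you fill each node's $\bot$'s back in with canonical values and chain the same lemma to get $\tilde{I}\subseteq I_v$. Both arguments rest on the same two observations (Crusader Consistency makes the filled-in lists identical across honest nodes, and $n-2f\ge 1$ makes the common interval non-empty), so this is the same approach.
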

\begin{proof} We prove the claim by induction over the total number $k$ of $\bot$ values received by honest nodes. 
\begin{itemize}
\item \textbf{base case:} Suppose that $k=0$. Thus, for all $v,w\in\HS$, $I_v=I_w$ by crusader consistency of $\CB$. Since $f=\lceil n/2\rceil-1$, each node retains at least one value, i.e., this interval is non-empty.
\item \textbf{step from $k$ to $k+1$:} Suppose the claim holds for $k\in \mathbb{N}$. By the induction hypothesis, there is $x\in \mathbb{R}$ such that $x\in I_v$ for all $v\in \HS$. Consider an alternative execution in which all nodes receive the same messages except that for some honest node $w$, one of its values is replaced by $\bot$. Denote by $\tilde{I}_v$, $v\in \HS$, the intervals in this new execution. For each honest node $v\neq w$, we have that $x\in I_v=\tilde{I}_v$. For $w$, we have that $x\in I_w\supseteq \tilde{I}_w$ by \Cref{lemma:removebots}.
\end{itemize}
Since this construction applies for any inputs $r_v$, $v\in \HS$, and we imposed no restrictions on the values received from faulty nodes, the claim of the lemma follows for all executions of the algorithm.
\end{proof}

\begin{theorem}\label{thm:apa} Algorithm~\ref{fig:aa} is an $(\ell,\ell/2,\lceil n/2\rceil-1)$-secure $2$-round protocol for approximate agreement.
\end{theorem}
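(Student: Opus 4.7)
The plan is to establish the two required properties---validity and $\ell/2$-consistency---separately for Algorithm $\APA$, leveraging \Cref{lemma:removebots} and \Cref{lemma:median} as the main technical tools. The round count of $2$ is immediate by inspection: one instance of $\CB$ followed by purely local post-processing that requires no further communication.

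First I would prove \emph{validity}. Fix an honest node $v\in\HS$ and write its retained interval as $I_v=[\alpha_v,\beta_v]$. By validity of $\CB$, every honest sender $w$ contributes $r_w$ to $v$'s list of non-$\bot$ inputs, so all $b_v$ of the $\bot$ entries observed by $v$ must correspond to faulty senders. Consequently, at most $f-b_v$ of the $n-b_v$ non-$\bot$ values received by $v$ come from faulty senders. A counting argument then yields $\alpha_v\ge\min_{w\in\HS}\{r_w\}$: otherwise, the $f-b_v+1$ smallest non-$\bot$ values at $v$ would all lie strictly below every honest input and would therefore all have to be supplied by faulty senders, contradicting the bound. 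Symmetrically, $\beta_v\le\max_{w\in\HS}\{r_w\}$, and hence $o_v=(\alpha_v+\beta_v)/2$ satisfies the validity requirement.

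Next, for \emph{$\ell/2$-consistency}, I would fix arbitrary $v,w\in\HS$ and assume without loss of generality that $o_v\le o_w$. By \Cref{lemma:median}, there exists $x\in I_v\cap I_w$, so in particular $\alpha_w\le x\le\beta_v$. Rewriting
\begin{equation*}
o_w-o_v=\frac{(\beta_w-\alpha_v)+(\alpha_w-\beta_v)}{2},
\end{equation*}
the second summand is non-positive, while the validity bounds just established force $\beta_w-\alpha_v\le\max_{u\in\HS}\{r_u\}-\min_{u\in\HS}\{r_u\}\le\ell$. Combining the two bounds yields $o_w-o_v\le\ell/2$.

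The only real subtlety lies in the validity step, where the number of $\bot$'s observed may differ between honest receivers and the counting against faulty contributions must respect the correspondingly reduced number of discarded values. Once validity is available as a bound on the width $\beta_v-\alpha_v\le\ell$, consistency drops out of \Cref{lemma:median} essentially via a one-line calculation: two intervals of width at most $\ell$ sharing a common point have midpoints at distance at most $\ell/2$.
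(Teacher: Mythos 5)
Your proof is correct and follows essentially the same route as the paper: validity of $\CB$ plus a discard-counting argument to pin $I_v$ inside the honest input range, then \Cref{lemma:median} to produce a common point of all retained intervals, from which the $\ell/2$ bound on midpoint distance follows by the same algebraic pairing (the paper writes $o_v-o_w\le(x+r_{n-f}-r_1-x)/2$, which is your two-summand decomposition in disguise). The only difference is that you spell out the counting behind $\alpha_v\ge\min_{w\in\HS}\{r_w\}$ explicitly, whereas the paper asserts it directly from the bound on corrupted nodes.
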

\begin{proof}
Without loss of generality, we order the inputs of the honest nodes in ascending order. As there are at most $f=\lceil n/2\rceil-1$ corrupted nodes and $\CB$ has validity, we have that $r_1\leq a_v$ and $r_{n-f}\geq b_v$ for all $v\in\HS$. Hence, $o_v=(a_v+b_v)/2$ satisfies the validity condition of approximate agreement.

By~\Cref{lemma:median}, there is $x\in \mathbb{R}$ such that $x\in I_v=[a_v,b_v]$ for all $v\in \HS$. For all $v,w\in \HS$, we can hence infer that
\begin{align*}
o_v-o_w&= \frac{a_v+b_v-a_w-b_w}{2}\\
&\leq \frac{x+r_{n-f}-r_1-x}{2}=\frac{r_{n-f}-r_1}{2}\le\frac{\ell}{2}.\qedhere
\end{align*}
\end{proof}
\Cref{cor:apa} readily follos by inductive application of \Cref{thm:apa}, where the output of the $i$th run is used as input for the $(i+1)$th~run.

\section{Crusader Pulse Synchronization}\label{sec:cps}

In this section, we provide and prove correct our pulse synchronization algorithm. In analogy to the classic algorithm by Lynch and Welch~\cite{LL84} that achieves asymptotically optimal skew in the signature-free setting, the algorithm can be viewed as simulating iterations of synchronous approximate agreement. The goal is to agree on the pulse times, i.e., pull them closer together despite interference from faulty nodes.

There are two notable differences from running ``plain'' approximate agreement. One is that the goal is to agree on (real) times, which cannot directly be accessed by the nodes. This results in the algorithm using communication and the hardware clocks to estimate the \emph{differences} between pulse times. Because these measurements are inexact and clocks drift, the second difference emerges: while approximate agreement works towards decreasing the error in each iteration, (in the worst case) these inaccuracies in the nodes' perception work towards increasing them.

In the following, assume parties are running an algorithm for pulse synchronization, where $\vec{p}_r$ denotes the pulse times of honest parties for the $r$th pulse.\footnote{Since a priori there is no formal guarantee that $v\in \HS$ will indeed generate all pulses, one can define $p_v^r:=\infty$ if $v$ generates fewer than $r$ pulses. However, our inductive proof does not need to reason about $p_{v,r+1}$ before it is established that it is finite.} We define $\|\vec{p}_r\|:=\max_{v\in \HS}\{p_v^r\}-\min_{v\in \HS}\{p_v^r\}$. Denote by $S$ the upper bound on $\|\vec{p}_r\|$ for all $r\in\mathbb{N}$ that we are going to show. Note that the algorithm is allowed to make use of $S$, even though we will be able to determine $S$ only once our analysis is complete.

\subsection{Timed Crusader Broadcast}
Before presenting the full algorithm, let us discuss how estimates of clock offsets can be obtained that comply with the requirements of (our simulation of) Algorithm~$\APA$. Intuitively, instead of communicating the (unknown) pulse times, each node $v\in \HS$ will broadcast (up to a small, fixed local time offset) when locally generating its pulse. Knowing that these messages are underway for about $d$ time, each recipient $w\in \HS$ can then determine an approximation of $\Delta_{v,w}^r$ of $p_w^r-p_v^r$ of error $O(u+(\vartheta-1)d)$. 

In order for these estimates to be used in the simulation of Algorithm~$\APA$, we need them to satisfy a timed analogon of Crusader Consistency. In the absence of any error, this would mean that for any two honest parties $v,w\in \HS$ and (possibly faulty) node $x\in \HS$ such that $\Delta_{v,x}^r,\Delta_{w,x}^r\neq \bot$, it holds that $\Delta_{v,x}^r-\Delta_{w,x}^r=p_w^r-p_v^r$. Under this condition, the estimates $p_v^r+\Delta_{v,x}^r$ and $p_w^r+\Delta_{w,x}^r$ are equal, i.e., it is as if faulty node $x$ had broadcast to a subset of the honest nodes at some specific time $p_x^r$ that each $v\in \HS$ with $\Delta_{v,x}^r\neq \bot$ agrees on. Naturally, we cannnot ensure such an exact match, but signatures give us the possibility to prove reception time up to an error of $O(u+(\vartheta-1)d)$. By rejecting the broadcast message from node $x$ if another node proves to have received the same ``broadcast'' more than $u$ (real) time earlier, we can hence ensure the above consistency condition up to an error of $O(u+(\vartheta-1)d)$. The protocol ensuring these guarantees is given in~\Cref{fig:tcb}.

\pprotocol{Algorithm $\TCB^r$}{Timed Crusader Broadcast. The routine assumes that $\|\vec{p}_r\|\le S$, where $p^r_v$ is the time at which $v$ generates the $r$th pulse. Encoding $r\in \mathbb{N}$ allows to distinguish instances, so that faulty nodes cannot reuse ``old'' signatures to disrupt an instance.}{fig:tcb}{t}{
We describe the protocol from the view of node $v$, where $w$ is the dealer. 
\begin{itemize}
\item If $v$ is the dealer, i.e., $v=w$, $v$ sends $\sig{r}{v}$ to all nodes at local time $H_v(p^r_v)+\vartheta\cdot S$ and terminates.
\item If $v$ is not the dealer, i.e., $v\neq w$:
\begin{itemize}
  \item If no correctly formed signature $\sig{r}{w}$ is received from $w$ at a local time $h\in(H_v(p^r_v),H_v(p^r_v)+\vartheta(d+(\vartheta+1)S))$, terminate with output $\bot$.
  \item Otherwise, denote by $h$ the first such local time. Forward $\sig{r}{w}$ to all nodes at time $h$.
  \item If $v$ receives a correctly formed signature $\sig{r}{w}'$ at a local time $h'\in (H_v(p^r_v),h+d-2u)$ from a node $x\neq w$, it terminates with output $\bot$.
  \item Otherwise, $v$ terminates with output $h$ at local time $h+d-2u$.
\end{itemize}
\end{itemize}
}

We now formalize the above claims. We first establish that honest dealers' messages are always accepted, i.e., recipients will output their local reception time, which corresponds to validity of Crusader Broadcast.
\begin{lemma} Let $r\in\mathbb{N}$ and suppose that $\|\vec{p}^r\|\leq S$. Then for all $v,w\in\HS$, $v$ outputs $h\neq\bot$ in $\TCB^r$ with $w$ as the sender.
\label{lem:correctscb1}
\end{lemma}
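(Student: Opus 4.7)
The plan is to verify that neither of the two triggers causing $v$ to output $\bot$ in $\TCB^r$ can fire when the dealer $w$ is honest. Concretely, I need to show (a) that $v$ receives a correctly formed $\sig{r}{w}$ at some local time $h$ inside the acceptance window $(H_v(p^r_v),\,H_v(p^r_v)+\vartheta(d+(\vartheta+1)S))$, and (b) that no correctly formed signature on $r$ by $w$ from any node $x\neq w$ reaches $v$ at a local time $h'\in(H_v(p^r_v),\,h+d-2u)$.

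For (a), I first translate the real time at which $w$ sends $\sig{r}{w}$. Since $w$ sends at local time $H_w(p^r_w)+\vartheta S$ and its hardware-clock rate lies in $[1,\vartheta]$, the corresponding real time $t_{\mathsf{send}}$ satisfies $t_{\mathsf{send}}-p^r_w\in[S,\vartheta S]$. Adding a delay in $[d-u,d]$ yields $t_v-p^r_w\in[S+d-u,\vartheta S+d]$ for the real reception time at $v$. Combining with the skew bound $|p^r_w-p^r_v|\le S$ gives $t_v-p^r_v\in[d-u,\,d+(\vartheta+1)S]$. Mapping back to $v$'s local time via the rate bounds $[1,\vartheta]$, the quantity $H_v(t_v)-H_v(p^r_v)$ lies in $[d-u,\,\vartheta(d+(\vartheta+1)S)]$, which (using $d>u$) lies in the acceptance window. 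Hence the first $\bot$-trigger cannot fire, and $v$ latches $h=H_v(t_v)$.

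For (b), I invoke unforgeability of signatures: any correctly formed signature on $r$ by $w$ must trace back to $w$'s single emission at real time $t_{\mathsf{send}}$. Thus the earliest any node $x$ can possess $\sig{r}{w}$ is $t_{\mathsf{send}}+d-u$, and so the earliest $v$ can receive a forwarded copy from $x\neq w$ is real time $t_{\mathsf{send}}+2(d-u)$. Combined with $t_v\le t_{\mathsf{send}}+d$, this shows that any such forward arrives at $v$ at a real time $t_v'\ge t_v+d-2u$ (relying on $d\ge 2u$). Since the hardware clock has rate at least $1$, we get $h'=H_v(t_v')\ge H_v(t_v)+d-2u=h+d-2u$, so $h'$ is outside the exclusion window.

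The main obstacle is bookkeeping: keeping straight the three layers of imprecision (skew $S$, delay uncertainty $u$, and rate drift $\vartheta$) and applying each bound in the correct direction. The upper boundary of the acceptance window is tight in the worst case where $p^r_v$ lags $p^r_w$ by $S$, the delay is the maximum $d$, and $v$'s clock runs at its fastest rate $\vartheta$; the exclusion analysis in (b) implicitly relies on $d\ge 2u$, ensuring a forwarded copy cannot race ahead of a direct delivery.
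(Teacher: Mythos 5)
Your proof is correct and follows essentially the same route as the paper's: bound the real send time of $w$'s signature via its local offset $\vartheta S$, propagate through the delay and skew bounds to place the reception inside the acceptance window, and then use unforgeability plus the two minimum-delay hops to show any echoed copy arrives at least $d-2u$ (real and hence local) time after the direct one. The only cosmetic differences are that your lower bound on $H_v(t_v)-H_v(p_v^r)$ is the slightly sharper $d-u$ rather than the paper's $0$, and your parenthetical "relying on $d\ge 2u$" is unnecessary (if $d<2u$ the exclusion window is empty and the claim is vacuous); neither affects correctness.
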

\begin{proof} We begin by proving that $w$'s signature is received by $v$ at a local time $h\in (H_v(p^r_v),H_v(p^r_v)+\vartheta(d+(\vartheta+1) S))$. Denote by $t_w$ the (unique) time at which $w$ sends $\sig{r}{w}$ and let $t_v\in [t_w+d-u,t_w+d]$ be the time at which $v$ receives it. We observe that
\begin{align*}
S= \frac{H_w(t_w)-H_w(p_w^r)}{\vartheta}\le t_w-p_w^r\le H_w(t_w)-H_w(p_w^r)=\vartheta S.
\end{align*}
Hence,
\begin{align*}
H_v(t_v)&\ge H_v(t_w)\\
&= H_v(t_w)-H_v(p_v^r)+H_v(p_v^r)\\
&\geq t_w-p_v^r+H_v(p_v^r)\\
&\geq p_w^r-p_v^r+S+H_v(p_v^r)\ge H_v(p_v^r),
\end{align*}
where the last step uses that $\|\vec{p}^r\|\le S$ by assumption. For the upper bound on $h$, we get that
\begin{align*}
H_v(t_v)&\le H_v(t_w+d)\\
&\le H_v(p_w^r+\vartheta S+d)-H_v(p_v^r)+H_v(p_v^r)\\
&\leq \vartheta(p_w^r-p_v^r+\vartheta\cdot S+d)+H_v(p_v^r)\\
&\leq \vartheta ((\vartheta+1) S+ d)+ H_v(p_v^r).
\end{align*}
It remains to prove that $v$ receives no correct signature of the form $\sig{r}{w}'$ at a local time $h'\in (H_v(p_v^r),h+d-2u)$ from $x\neq w$. Since $w$ sends no other signature than $\sig{r}{w}$ and signatures cannot be forged, any such message must be sent after $x$ learned $\sig{r}{w}$, at time $t_x\ge t_w+d-u$. Hence, $v$ receives any such message at a time $t_v'\ge t_x+d-u\ge t_w+2(d-u)$. On the other hand $t_v\le t_w+d$, implying that $t_v'-t_v\ge d-2u$. We conclude that $h'-h=H_v(t_v')-H_v(t_v)\ge t_v'-t_v\ge d-2u$.
\end{proof}

Having established that honest nodes broadcasts are accepted, let us establish the counterpart of Crusader Consistency, namely that if honest nodes accept a broadcast, they do so within a short time of each other.
\begin{lemma}\label{lemma:rtbound} Let $r\in\mathbb{N}$ and $u,v\in\HS$. Suppose that $u,v$ participate in an instance of $\TCB^r$ with dealer $w$ and output $h_u,h_v\not\in\{\bot\}$, respectively. Let $t_u,t_v$ denote the times at which $u$ and $v$ receive the messages from $w$, respectively. Then $|t_u-t_v|\leq (1-1/\vartheta)\cdot d+2u/\vartheta$.
\end{lemma}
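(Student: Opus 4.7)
The plan is to exploit the fact that if $u\in\HS$ accepts the broadcast, then $u$ immediately forwards $\sig{r}{w}$ to everyone upon receipt, and this forward must reach $v$ within at most $d$ real time---so if $v$ also accepts, this forward cannot fall inside $v$'s forbidden local reception window $(H_v(p_v^r),h_v+d-2u)$, which in turn constrains how late $v$ may receive its own copy from $w$.

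Without loss of generality, assume $t_u\leq t_v$. By inspection of the protocol, $u$ forwards $\sig{r}{w}$ to all nodes at local time $h_u=H_u(t_u)$, i.e., at real time $t_u$; let $t'\in[t_u+d-u,\,t_u+d]$ be the real time at which this forward arrives at $v$. The first step I would carry out is to verify that the local arrival time $H_v(t')$ strictly exceeds $H_v(p_v^r)$, so that the $\bot$-trigger check of the protocol actually engages on the forward. This follows from $t_u>p_u^r$ (forced by $u$'s own outer reception window), the pulse-skew bound $p_u^r\geq p_v^r-S$, and the minimum-delay lower bound, which together give $t'\geq t_u+d-u>p_v^r-S+d-u>p_v^r$ under the parameter slack $d>u+S$ that the surrounding pulse-synchronization analysis will guarantee.

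Given applicability of the $\bot$-check, I would then argue from $h_v\neq\bot$ that the forward from $u$ cannot lie in $(H_v(p_v^r),h_v+d-2u)$; combined with $H_v(t')>H_v(p_v^r)$ this forces $H_v(t')\geq H_v(t_v)+d-2u$. Since $d>2u$ and $H_v$ is strictly increasing, this in particular implies $t'\geq t_v$, so the hardware clock-rate upper bound gives $d-2u\leq H_v(t')-H_v(t_v)\leq \vartheta(t'-t_v)$, i.e.\ $t'-t_v\geq (d-2u)/\vartheta$. Combining with $t'\leq t_u+d$ then yields
\[
t_v-t_u\;\leq\; d-\frac{d-2u}{\vartheta}\;=\;\Bigl(1-\frac{1}{\vartheta}\Bigr)d+\frac{2u}{\vartheta},
\]
which is the claimed bound.

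The hard part will be the applicability step in the second paragraph, namely ruling out that $u$'s forward could reach $v$ at local time already at or below $H_v(p_v^r)$, in which case the $\bot$-trigger would not engage at all and the argument would collapse. Handling this cleanly forces one to invoke parameter slack ($d>u+S$ together with $d>2u$), which should be provided by the surrounding algorithm-parameter choices; the remaining manipulations are routine arithmetic on the delay and clock-rate bounds.
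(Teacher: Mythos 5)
Your proof is correct and follows essentially the same route as the paper's: the earlier receiver's forwarded signature reaches the later receiver within $d$ real time, and since the later receiver does not output $\bot$, that echo must arrive at least $d-2u$ local time after its own reception, which combined with the clock-rate bound yields the claim (the paper merely swaps which of $u,v$ is taken as the later receiver). Your extra care about the echo's local arrival time exceeding $H_v(p_v^r)$, requiring $d>u+S$, addresses a point the paper's proof glosses over, but it is a refinement of the same argument rather than a different approach.
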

\begin{proof} Denote as $t_u,t_v$ the times that $u$ and $v$ receive a message from $w$ in $\TCB^r$ s.t. $H_u(t_u)\in(H_u(p^r_u),H_u(p^r_u)+\vartheta(d+\vartheta\cdot S))$ (and similarly for $v$). If for either $u$ or $v$ no such time exists, then that party outputs $\bot$ and the claim is vacuously true.

Without loss of generality, assume that $t_u\geq t_v$. $u$ receives the echo from $v$ at time $t\leq t_v+d$. Since $u$ does not output $\bot$, we have that $\vartheta\cdot(t-t_u)\geq H_u(t)-H_u(t_u)\geq d-2u.$ This implies that $t-t_u\geq (d-2u)/\vartheta$. It follows $t_u-t_v=t-t_v-(t-t_u)\leq d -(d-2u)/\vartheta=(1-1/\vartheta)\cdot d+2u/\vartheta$.
\end{proof}

\subsection{Pulse Synchronization Algorithm}
With Timed Crusader Broadcast in place, we are ready to proceed to the main algorithm, i.e., iterative simulation of synchronous approximate agreement steps on pulse times. The algorithm is given in~\Cref{fig:cps}.

\pprotocol{Algorithm $\CPS$}{Crusader Pulse Synchronization, where $f=\lceil n/2\rceil-1$ is the number of faulty nodes that can be sustained. The algorithm assumes that $H_v(0)\in [0,S]$ for all $v\in \HS$.}{fig:cps}{}{
We describe the protocol from the view of node $v$. Wait until local time $S$.\bigskip

Then do for all $r\in\mathbb{N}$:
\begin{itemize}
\item Generate $r$th pulse.
\item Simultaneously participate in an execution of $\TCB^r$ with sender $w$, for each node $w\in[n]$. Let $h_{v,w}$ denote the output of the instance corresponding to sender $w$.
\item For each node $w\in[n]$ s.t. $h_{v,w}\neq\bot$, compute $\Delta_{v,w}^r:=h_{v,w}-H_v(p_v^r)-d+u-S$. For all remaining $w\in[n]$, set $\Delta_{v,w}^r:=\bot$. Denote $b$ the number of $\bot$ values.
\item Sort all the non-$\bot$ values computed in the previous step, then discard the lowest $f-b$ and the highest $f-b$ of those values. Denote $I$ the interval spanned by the remaining values.
\item Set $\Delta_v^r$ to the midpoint of $I$.
\item Wait until local time $H_v(p_v^r)+\Delta_v^r+T$.
\end{itemize}
}
Intuitively, in each iteration $r\in \mathbb{N}$ the algorithm lets each node $w$ communicate its pulse time using crusader broadcast. The output of this subroutine at $v$ is then used to compute an estimate $\Delta_{v,w}^r$ of $p_w^r-p_v^r$, which in case of a faulty sender might fail and result in $\bot$. These estimates are then used exactly as in Algorithm~$\APA$ to determine a value of $\Delta$, which is then used as a correction to the next pulse time relative to the nominal duration of an iteration of $T$. The fact that Algorithm~$\APA$ is used on the \emph{differences} of pulse times is compensated for by adding its output to the previous local pulse time of $H_v(p_v^r)$; since substracting $p_v^r$ does not change the order of the received values and the output is a convex combination of two inputs of a specific rank, up to the errors introduced by clock drift and delay uncertainty, this is equivalent to executing Algorithm~$\APA$ on inputs $p_v^r$.

We begin our analysis of the algorithm by translating the validity and consistency guarantees of Algorithm~$\TCB^r$ into corresponding guarantees on the computed estimates $\Delta_{v,w}^r$. In the following, we will bound the error in the estimates by $\delta:=2u+(\vartheta^2-1)d+2(\vartheta^3-\vartheta^2)S$. First, we show validity, i.e., that honest nodes compute estimates of their difference in pulse times with error smaller than $\delta$. The result is shown analogously to~\cite[Ch.~10, Lem.~10.8]{Lec21}; we defer the full proof to \Cref{app:proofs}.
\begin{lemma}\label{lemma:deltabh} Let $r\in\mathbb{N}$ and suppose that $\|\vec{p}^r\|\leq S$. Let $v,w\in\HS$ and suppose that $v$ participates in an instance of $\TCB^r$ with dealer $w$. Consider $\Delta^r_{v,w}$ as defined in algorithm $\CPS$. Then $\Delta_{v,w}^r\in[p^r_w-p^r_v,p^r_w-p^r_v+\delta).$
\label{lem:correctscb2}
\end{lemma}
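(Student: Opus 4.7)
The plan is to unfold the definition of $\Delta_{v,w}^r$ into real-time quantities, then bound each contribution using the clock-rate and delay constraints. Let $t_w$ be the real time at which the honest dealer $w$ broadcasts $\sig{r}{w}$ and $t_v$ the real time at which $v$ receives it; then $h_{v,w} = H_v(t_v)$. By the algorithm, $H_w(t_w) = H_w(p_w^r) + \vartheta S$, and the bound $1 \le H_w'\le \vartheta$ on $w$'s clock rate gives $S \le t_w - p_w^r \le \vartheta S$. Similarly, the channel delay bound yields $t_v \in [t_w + d - u,\, t_w + d]$. Finally, combining these bounds with $\|\vec{p}^r\|\le S$, I would verify $t_v \ge p_v^r$, which is what lets me apply $t_v - p_v^r \le H_v(t_v) - H_v(p_v^r) \le \vartheta(t_v - p_v^r)$ in both directions.

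For the lower bound, I would start from the looser inequality $H_v(t_v) - H_v(p_v^r) \ge t_v - p_v^r$. Substituting into
\[
\Delta_{v,w}^r = H_v(t_v) - H_v(p_v^r) - d + u - S
\]
and expanding $t_v - p_v^r = (t_v - t_w) + (t_w - p_w^r) + (p_w^r - p_v^r)$, the worst case uses $t_v - t_w \ge d - u$ and $t_w - p_w^r \ge S$. The $-d + u - S$ correction cancels exactly against these minima, leaving $\Delta_{v,w}^r \ge p_w^r - p_v^r$, as required.

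For the upper bound, I would instead use $H_v(t_v) - H_v(p_v^r) \le \vartheta(t_v - p_v^r)$ and expand in the same way, this time plugging in $t_v - t_w \le d$ and $t_w - p_w^r \le \vartheta S$. The resulting expression is of the form
\[
\Delta_{v,w}^r - (p_w^r - p_v^r) \le (\vartheta - 1)(p_w^r - p_v^r) + \vartheta d + \vartheta^2 S - d + u - S,
\]
after which I would use $p_w^r - p_v^r \le S$ and collect terms into contributions that are additive uncertainty (proportional to $u$), drift over the nominal delay (proportional to $(\vartheta-1)d$), and drift over the waiting window of size $\Theta(S)$ (proportional to $(\vartheta^2-1)S$ or coarser). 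A direct comparison with $\delta = 2u + (\vartheta^2-1)d + 2(\vartheta^3-\vartheta^2)S$ then shows strict inequality for $\vartheta$ in the admitted range.

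The main thing to be careful about is the sign bookkeeping in the expansion: because $\Delta_{v,w}^r$ is offset by $-d + u - S$, one has to make sure that this offset is dimensioned to cancel the minimum delay and the mandatory $\vartheta S$-wait in the lower bound, while still leaving enough slack to absorb the worst-case drift in the upper bound. Otherwise the argument is a standard telescoping of drift-and-delay errors in the spirit of~\cite[Ch.~10, Lem.~10.8]{Lec21}, so after the two chains of inequalities the claim follows.
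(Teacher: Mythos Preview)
Your proposal is correct and follows essentially the same route as the paper's proof: both expand $\Delta_{v,w}^r = H_v(t_v)-H_v(p_v^r)-d+u-S$, bound the local-time increment by $t_v-p_v^r$ from below and $\vartheta(t_v-p_v^r)$ from above, decompose $t_v-p_v^r$ into the delay $t_v-t_w$, the waiting interval $t_w-p_w^r\in[S,\vartheta S]$, and the pulse offset $p_w^r-p_v^r$, and then apply $\|\vec{p}^r\|\le S$ to arrive at the bound $u+(\vartheta-1)d+(\vartheta^2+\vartheta-2)S<\delta$. Your explicit note that one must check $t_v\ge p_v^r$ (which the paper gets implicitly from \Cref{lem:correctscb1}) is a welcome bit of care, but otherwise the two arguments coincide.
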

\begin{proof}[Proof Sketch.]
The computation of $\Delta^r_{v,w}$ accounts for the minimum time (and hence local time) that passes before $v$ receives the message from $w$, which is accepts by~\Cref{lem:correctscb1}. The upper bound follows by checking the maximum local time that passes, which is proportional to $\vartheta$ times the real difference between the pulse times plus a delay uncertainty.
\end{proof}
Next, we show the counterpart of Crusader Consistency, i.e., that non-$\bot$ estimates are consistent up to error $\delta$.
\begin{lemma}\label{lemma:deltabf} Let $r\in\mathbb{N}$ and suppose that $\|\vec{p}^r\|\leq S$. Moreover, let $v,w\in\HS$, $x\in[n]/\HS$, and $h_v,h_w\not\in\{\bot\}$ denote the outputs of $v$ and $w$ in $\TCB^r$ with sender $x$. Then $|\Delta_{v,x}^r-\Delta_{w,x}^r-(p^r_w-p^r_v)|<\delta$.
\label{lem:scb1}
\end{lemma}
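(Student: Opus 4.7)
The plan is to reduce the difference of the two estimates to a combination of two hardware-clock drift terms plus the real-time gap $t_v - t_w$ (where $t_v, t_w$ are the times at which $v$ and $w$ actually receive $x$'s signed message), and then bound each piece. By the definition $\Delta_{v,x}^r = h_v - H_v(p_v^r) - d + u - S$ together with $h_v = H_v(t_v)$ and analogously for $w$, the additive constants $d - u + S$ cancel and a short algebraic manipulation gives
\begin{equation*}
\Delta_{v,x}^r - \Delta_{w,x}^r - (p_w^r - p_v^r) = \bigl[(H_v(t_v) - H_v(p_v^r)) - (t_v - p_v^r)\bigr] - \bigl[(H_w(t_w) - H_w(p_w^r)) - (t_w - p_w^r)\bigr] + (t_v - t_w).
\end{equation*}

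Next, I would bound each of the three summands. The acceptance window in $\TCB^r$ forces $H_v(t_v) > H_v(p_v^r)$, which (since clock rates are at least $1$) means $t_v > p_v^r$, and similarly for $w$; hence both bracketed drift terms are nonnegative and individually at most $(\vartheta - 1)(t_\ast - p_\ast^r)$ by the clock-rate bounds. The acceptance window also yields $H_v(t_v) - H_v(p_v^r) < \vartheta(d + (\vartheta+1)S)$, and again using clock rate $\geq 1$ I get $t_v - p_v^r < \vartheta(d + (\vartheta+1)S)$, and similarly for $w$. For the real-time gap I would invoke \Cref{lemma:rtbound} to obtain $|t_v - t_w| \leq (1 - 1/\vartheta)d + 2u/\vartheta$.

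Finally, combining these via the triangle inequality, the whole expression is bounded in absolute value by
\begin{equation*}
(\vartheta - 1)\vartheta \bigl(d + (\vartheta+1)S\bigr) + (1 - 1/\vartheta)d + 2u/\vartheta,
\end{equation*}
and a routine check shows that the $d$-coefficient $\vartheta(\vartheta-1) + (1 - 1/\vartheta) = \vartheta^2 - 2 + 1/\vartheta + (1-1/\vartheta)$ simplifies using $1 - 2 + 1/\vartheta \cdot \vartheta = -(\vartheta-1)^2/\vartheta \leq 0$ to at most $\vartheta^2 - 1$, that the $S$-coefficient $(\vartheta - 1)\vartheta(\vartheta+1) = \vartheta^3 - \vartheta$ is at most $2(\vartheta^3 - \vartheta^2)$ (equivalently $\vartheta(\vartheta-1)^2 \geq 0$), and that $2u/\vartheta \leq 2u$. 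This yields the claimed bound of $\delta = 2u + (\vartheta^2-1)d + 2(\vartheta^3 - \vartheta^2)S$, with strict inequality inherited from the strict acceptance window.

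The main obstacle is not any conceptual one but the bookkeeping of clock-rate slack: I must isolate the drift on the interval $[p_v^r, t_v]$ from the drift on $[p_w^r, t_w]$ so that each is controlled by the acceptance window of its own party, while reserving \Cref{lemma:rtbound} to absorb the gap $t_v - t_w$ that arises from the echo mechanism. Once the decomposition above is in place, the rest is arithmetic verifying that the coefficients fit under $\delta$.
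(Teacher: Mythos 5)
Your proposal is correct and follows essentially the same route as the paper's proof: both reduce the quantity to the clock-drift contributions on $[p_v^r,t_v]$ and $[p_w^r,t_w]$ plus the reception-time gap $t_v-t_w$, bound the gap via \Cref{lemma:rtbound}, bound the drift via the acceptance window of $\TCB^r$, and finish with the same coefficient arithmetic (your symmetric two-drift-term decomposition is just a slightly more explicit version of the paper's one-sided rate bounds). The only blemish is the garbled intermediate expression in your $d$-coefficient check, but the identity $2-\vartheta-1/\vartheta=-(\vartheta-1)^2/\vartheta\le 0$ you invoke is the right justification and the conclusion stands.
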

\begin{proof}
Let again $t_v$ and $t_w$ denote the times at which $v$ and $w$ receive the messages from the dealer $x$. By \Cref{lemma:rtbound}, it holds that $t_v-t_w\le (1-1/\vartheta)d+2u/\vartheta$. We get that
\begin{align*}
&\Delta_{v,x}^r-\Delta_{w,x}^r-(p_v^r-p^r_u)\\
=\,&H_v(t_v)-H_v(p_v^r)-(H_w(t_w)-H_w(p^r_w))-(p_w^r-p^r_v)\\
\leq\,& \vartheta(t_v-p_v^r)-(t_w-p^r_w)-(p_v^r-p^r_w)\\
=\,& (\vartheta-1)(t_v-p_v^r)+t_v-t_w\\
\leq \,&(\vartheta-1)(t_v-p_v^r)+\left(1-\frac{1}{\vartheta}\right) d+\frac{2u}{\vartheta}\\
\leq \,& (\vartheta-1)(H_v(t_v)-H_v(p_v^r))+\left(1-\frac{1}{\vartheta}\right) d+\frac{2u}{\vartheta}\\
\leq \,&(\vartheta-1)(\vartheta d +(\vartheta^2+\vartheta) S)+\left(1-\frac{1}{\vartheta}\right) d+\frac{2u}{\vartheta}<\delta.\qedhere
\end{align*}
\end{proof}

Based on the above bounds, we will prove by induction that for all $r\in \mathbb{N}$, $\|\vec{p}^r\|\le S$. The base of the induction is given by the assumption that hardware clocks are initialized with skew $S$. The following lemma is the key argument required for the step, establishing that, essentially, an approximate agreement step on the pulse times with error at most $\delta$ is performed, assuming that the estimates satisfy the above validity and consistency conditions.
\begin{lemma}\label{lemma:apa_sim} Fix $r\in \mathbb{N}$. Suppose that each $v\in\HS$ computes for each $w\in[n]$ a value $\Delta^r_{v,w}\in\mathbb{R}\cup\{\bot\}$, such that the following properties hold:
\begin{itemize}
\item For $v,w\in\HS$, $\Delta_{v,w}^r\in[p_w^r-p_v^r,p_w^r-p_v^r+\delta]$.
\item For $v,w\in\HS$ and $x\in [n]$ such that $\Delta_{v,x}^r,\Delta_{w,x}^r\not\in\{\bot\}$, $|\Delta^r_{v,x}-\Delta_{w,x}^r-(p_w^r-p_v^r)|\leq \delta$.
\end{itemize}
Then the following statements are true:
\begin{enumerate}
\item For all $v\in\HS$: $-\|\vec{p}^r\|\leq \Delta_v^r\leq \|\vec{p}^r\|+\delta.$
\item $\|\vec{\Delta}^r+\vec{p}^r\|\leq\|\vec{p}^r\|/2+\delta.$
\end{enumerate}
\end{lemma}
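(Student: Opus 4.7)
The plan is to adapt the standard approximate agreement analysis behind \Cref{thm:apa} (namely \Cref{lemma:removebots,lemma:median}) to the \emph{timed} setting, absorbing the $\delta$-slack that the consistency property now carries.

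For Part 1, I would first replay the counting argument behind the choice of $a_v$ and $b_v$: since at most $f$ nodes are faulty and $b$ of them yield $\bot$ at $v$, faulty senders contribute at most $f-b$ non-$\bot$ values at $v$, so after discarding the lowest and highest $f-b$ values, the retained endpoints satisfy $a_v\geq\min_{w\in\HS}\Delta_{v,w}^r$ and $b_v\leq\max_{w\in\HS}\Delta_{v,w}^r$. Combined with property~1 and the trivial bounds $\min_{w\in\HS} p_w^r\leq p_v^r\leq\max_{w\in\HS} p_w^r$, this yields $a_v\geq -\|\vec{p}^r\|$ and $b_v\leq \|\vec{p}^r\|+\delta$, from which Part~1 follows since $\Delta_v^r=(a_v+b_v)/2\in[a_v,b_v]$.

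For Part 2, I would work in shifted coordinates $\tilde{s}_{v,w}^r:=\Delta_{v,w}^r+p_v^r$, in which the retained interval at $v$ becomes $[\alpha_v,\beta_v]:=[a_v+p_v^r,\,b_v+p_v^r]$ with midpoint $\Delta_v^r+p_v^r$. Here property~1 reads $\tilde{s}_{v,w}^r\in[p_w^r,p_w^r+\delta]$ for $v,w\in\HS$, and property~2 reads $|\tilde{s}_{v,x}^r-\tilde{s}_{w,x}^r|\leq\delta$ for any sender $x$ on which $v,w\in\HS$ both output a value. The central claim is a timed analogue of \Cref{lemma:median}: there exists $x\in\mathbb{R}$ such that $\alpha_v\leq x\leq\beta_v+\delta$ for every $v\in\HS$. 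I would prove this by induction on the total number of $\bot$ values received by honest nodes, mirroring the structure of \Cref{lemma:median}. In the base case (no $\bot$s), the elementwise $\delta$-closeness of the (equally sized) honest receivers' lists is preserved by sorting (sorting is $\ell_\infty$-nonexpansive), giving $|\alpha_v-\alpha_w|,|\beta_v-\beta_w|\leq\delta$; then $x:=\max_{v\in\HS}\alpha_v$ satisfies $\alpha_v\leq x\leq\alpha_w+\delta\leq\beta_w+\delta$ for all $v,w\in\HS$. In the inductive step, I would pick an honest $v$ with a $\bot$ from some faulty sender $w$ and consider the alternative setting where this $\bot$ is replaced by a value copied from another honest receiver that does hold a non-$\bot$ estimate for $w$ (or an arbitrary value if none exists); this substitution preserves property~2, so the induction hypothesis yields a common $x$ for the alternative, and \Cref{lemma:removebots} applied in shifted coordinates gives $[\alpha_v^{\mathrm{alt}},\beta_v^{\mathrm{alt}}]\subseteq[\alpha_v,\beta_v]$, so $x$ still lies in $[\alpha_v,\beta_v+\delta]$.

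Given such an $x$, for any $v,w\in\HS$ I would compute
\[ o_v-o_w = \frac{(\alpha_v-\alpha_w)+(\beta_v-\beta_w)}{2} \leq \frac{x+\beta_v-\alpha_w-(x-\delta)}{2} = \frac{\beta_v-\alpha_w+\delta}{2}, \]
and combining with $\beta_v\leq\max_{w\in\HS} p_w^r+\delta$ and $\alpha_w\geq\min_{w\in\HS} p_w^r$ (the Part~1 bounds re-expressed in shifted coordinates) gives at most $(\|\vec{p}^r\|+2\delta)/2=\|\vec{p}^r\|/2+\delta$; symmetry in $v,w$ then closes Part~2. The main obstacle I anticipate is the inductive step of the timed median claim: I need to exhibit a substitute value for the $\bot$ that preserves property~2 with respect to every other honest receiver holding a non-$\bot$ estimate from the same sender, and do so while preserving the \emph{asymmetric} ``$+\delta$'' invariant --- this one-sided form of the invariant is precisely what yields the tight $\|\vec{p}^r\|/2+\delta$ instead of a looser $\|\vec{p}^r\|/2+3\delta/2$, so it must be respected carefully throughout both the base case and the induction.
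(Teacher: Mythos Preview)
Your proposal is correct but takes a genuinely different route from the paper. The paper first treats the degenerate case $\delta=0$, where the second hypothesis forces $\Delta_{v,x}^r+p_v^r=\Delta_{w,x}^r+p_w^r$ exactly; this lets it define a single ``virtual pulse time'' $p_x^r$ for each faulty $x$ with some non-$\bot$ output and literally identify $p_v^r+\Delta_v^r$ with node $v$'s output in an execution of Algorithm~$\APA$ on inputs $(p_w^r)_w$, so \Cref{thm:apa} applies as a black box. The general case $\delta>0$ is then dispatched by a one-line perturbation: defining $p_x^r:=\min_{u}\tilde s_{u,x}^r$ for faulty $x$, every received value is a shift in $[0,\delta]$ of its $\delta=0$ counterpart, and since sorting is monotone the rank-selected endpoints (hence the midpoint) inherit a shift in $[0,\delta]$. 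Your approach instead carries the $\delta$-slack through a fresh proof of the median lemma in shifted coordinates, establishing the asymmetric invariant $\alpha_v\le x\le \beta_v+\delta$ by induction on the number of $\bot$s and the $\ell_\infty$-nonexpansiveness of sorting. This is more work but more self-contained: it never needs to name a canonical $p_x^r$, and it makes the error propagation explicit at every step. The paper's reduction is shorter and more modular; yours would generalize more readily if one wanted to track heterogeneous per-sender errors rather than a single global $\delta$.
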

\begin{proof} We first show the claim for the special case $\delta=0$. Consider an honest node $v\in\HS$ and a faulty node $x\in[n]\setminus\HS$ such that $\Delta_{v,x}^r\neq\bot$. We define $p^r_x:=\Delta^r_{v,x}+p^r_v$. We observe that $p_v^r+\Delta^r_{v,x}=p_w^r+\Delta^r_{w,x}$ for any correct node $w\in\HS$ from the second condition of the statement and the assumption that $\delta=0$. Hence $p^r_x=p_w^r+\Delta^r_{w,x}$ for all $w\in \HS$ satisfying $\Delta_{w,x}^r\neq\bot$. 

We next show that $p_v^r+\Delta^r_{v}$ for $v\in\HS$ equals the output of $v$ in an iteration of an execution of algorithm $\APA$ specified as follows:
\begin{itemize}
\item Nodes $v\in\HS$ have input $p_v^r$.
\item For nodes $x\in[n]\setminus\HS$ such that $\CB$ with dealer $x$ outputs $\Delta_{v,x}^r\neq \bot$ at $v\in \HS$, $v$ receives $p^r_x$ from $x$.
\end{itemize}
We consider the vector $L$ of values used to compute the value $\Delta_v^r$ in iteration $r$ of $\CPS$. For $w\in[n]$, $\Delta^r_{v,w}=p_w^r-p^r_v$. Hence, $L$ can be obtained by shifting the non-$\bot$ input values used in the above execution of $\APA$ by $-p_v^r$. This implies that in both computations, $v$ assigns the input corresponding to node $w$ to the same position $i_w$ in the sorted vector of non-$\bot$ inputs. Hence, the indices of discarded inputs in the list $L$ remains the same in both of these executions as well. Let $\ell=|L|$ denote the length of $L$ and $b:=n-\ell$ the number of $\bot$ values. Without loss of generality, we assume $L$ to be sorted in ascending order, and that parties are sorted in ascending order by size of their pulse times $\vec{p}$. Then in the $r$th iteration of $\CPS$, $v$ computes the midpoint $\Delta_v^r$ of interval $I_L$ spanned by the remaining points in $L$ as 
\begin{align*}
\Delta_v^r&=\frac{L_{\ell-f+b}+L_{f-b+1}}{2}\\
&=\frac{(p^{r}_{\ell-f+b}-p_v^r)+(p^{r}_{f-b+1}-p_v^r)}{2}\\
&=\frac{p^{r}_{\ell-f+b}+p^{r}_{f-b+1}}{2}-p_v^r.
\end{align*}
By comparison, the above execution of $\APA$ computes the midpoint as $(p^{r}_{\ell-f+b}+p^{r}_{f-b+1})/2=M_v$ for every node $v\in\HS$. Thus, both executions compute the same midpoint, up to a shift of $p_v^r$. That is, $\CPS$ computes the vector of midpoints $\Delta_v^r=M_v-p_v^r$ for every node $v\in\HS$. As $\APA$ satisfies the $1/2$-consistency condition of approximate agreement, we know that the above execution of $\APA$ computes midpoints $\vec{M}$ satisfying $\|\vec{M}\|\leq\|p^r\|/2$. It follows that $\|\vec{\Delta}^r+\vec{p}^r\|=\|\vec{M}\|\leq\|\vec{p}^r\|/2.$ This proves the second statement of the lemma.

To prove the first statement, we use the validity condition of approximate agreement. We obtain for all $v\in\HS$ that $\min_{w\in\HS} \{p^r_w\} \leq M_v\leq \max_{w\in\HS} \{p^r_w\}$. Since $\vec{\Delta}^r=\vec{M}-\vec{p}^r$, we obtain for all $v\in\HS$ that 
\begin{align*}
-\|p^r\|&=\min_{w\in\HS}\{p^r_w\}-\max_{w\in\HS}\{p^r_w\}\leq M_v-p^r_v=\Delta_v^r\\
&\leq \max_{w\in\HS}\{p_w^r\}-\min_{w\in\HS} \{p^r_w\}= \|p^r\|.
\end{align*}

For the general case of $\delta>0$, we note that the list of values received by $v\in \HS$ can be obtained from the one of an execution with $\delta=0$ by adding to each received value a shift between $0$ and $\delta$. Thus, our analysis for the case of $\delta=0$ implies that
\begin{align*}
\frac{p^{r}_{\ell-f+b}+p^{r}_{f-b+1}}{2}-p_v^r\le \Delta_v^r&\le \frac{p^{r}_{\ell-f+b}+\delta+p^{r}_{f-b+1}+\delta}{2}-p_v^r\\
&=\frac{p^{r}_{\ell-f+b}+p^{r}_{f-b+1}}{2}-p_v^r+\delta.
\end{align*}
We conclude that $-\|\vec{p}^r\|\le \Delta_v^r\le \|\vec{p}^r\|+\delta$, showing the first claim of the lemma. Moreover, $\Delta_v^r+p^r_v\in[M_v,M_v+\delta]$ and hence, $\|\Delta^r+p^r\|\leq\|\vec{M}\|+\delta\leq\|p^r\|/2+\delta$.
\end{proof}
An immediate consequence of \Cref{lemma:apa_sim} is that if the hypothesis of our induction holds, i.e., $\|\vec{p}^r\|\le S$, then the computed shifts $\vec{\Delta}^r$ are feasible, in the sense that $v\in \HS$ the waiting statement at the end of the main loop of Algorithm~$\CPS$ refers to a local time that is larger then the local time when all instances of crusader broadcast for round $r$ have terminated at $v$, so that $v$ can compute $\Delta_v^r$; a proof is given in~\Cref{app:proofs}.
\begin{corollary}\label{cor:pulses} Fix $r\in \mathbb{N}$. Suppose that $T\geq (\vartheta^2+\vartheta+1) S+(\vartheta+1)d-2u$ and $\|\vec{p}^r\|\leq S$. Moreover, denote as $\tau^r_v$ the time at which $v\in \HS$ finalizes the computation of $\Delta^r_v$. Then for all $v\in\HS$, $H_v(p_v^r)+\Delta^r_v+T\geq H_v(\tau^r_v)$.
\end{corollary}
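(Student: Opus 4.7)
The plan is to combine a lower bound on $\Delta_v^r$ with an upper bound on the local time elapsing at $v$ during all $\TCB^r$ instances, and then to verify that the assumed lower bound on $T$ exactly closes the gap.

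First, I would invoke Lemma~\ref{lemma:apa_sim}, whose hypotheses follow from Lemmas~\ref{lem:correctscb2} and~\ref{lem:scb1} together with the assumption $\|\vec p^r\|\le S$. Its first statement yields $\Delta_v^r\ge -\|\vec p^r\|\ge -S$ for every $v\in \HS$, which is the lower bound I need to feed into the final inequality.

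Second, I would inspect Algorithm~$\TCB^r$ directly to bound the latest local time at which any instance can finish at $v$. For every dealer $w$, node $v$ either outputs $\bot$ by local time $H_v(p_v^r)+\vartheta(d+(\vartheta+1)S)$ because no valid signature arrived in the acceptance window, or it outputs at local time $h+d-2u$ for some reception time $h$ in that window. Since the subsequent sorting and midpoint steps are instantaneous, this gives
\begin{equation*}
H_v(\tau_v^r)\le H_v(p_v^r)+(\vartheta+1)d+(\vartheta^2+\vartheta)S-2u.
\end{equation*}
Combining with $\Delta_v^r\ge -S$, the bound $H_v(p_v^r)+\Delta_v^r+T\ge H_v(\tau_v^r)$ then reduces to $T\ge (\vartheta^2+\vartheta+1)S+(\vartheta+1)d-2u$, which is exactly the hypothesis.

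The only subtle point I anticipate is making the bound on $H_v(\tau_v^r)$ truly uniform across \emph{all} $\TCB^r$ instances at $v$, including those with faulty dealers who might try to force a late reception. This is ruled out by the window check in the protocol: any signature arriving after local time $H_v(p_v^r)+\vartheta(d+(\vartheta+1)S)$ triggers an immediate $\bot$ output, so no adversarial strategy can push $v$'s completion time beyond the stated bound.
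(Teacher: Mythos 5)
Your proof is correct and follows essentially the same route as the paper's: bound $H_v(\tau_v^r)$ by the latest possible termination time $H_v(p_v^r)+(\vartheta+1)d-2u+(\vartheta^2+\vartheta)S$ of the $\TCB^r$ instances, obtain $\Delta_v^r\ge-S$ from the first statement of \Cref{lemma:apa_sim} via \Cref{lemma:deltabh,lemma:deltabf}, and observe that the hypothesis on $T$ closes the gap. Your remark about uniformity of the termination bound over instances with faulty dealers is a valid observation that the paper leaves implicit.
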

\begin{proof}
First, we note that Algorithm~$\TCB$ terminates at the latest at local time $H_v(p_v^r)+(\vartheta+1)d-2u+(\vartheta^2+\vartheta)S$. Hence, $H_v(\tau_v^r)\le H_v(p_v^r)+(\vartheta+1)d-2u+(\vartheta^2+\vartheta)S$. Moreover, due to the condition $\|\vec{p}^r\|\leq S$, we can apply \Cref{lemma:deltabf,lemma:deltabh}, which prove that the preconditions of \Cref{lemma:apa_sim} hold for iteration $r$. We apply the precondition $\|\vec{p}^r\|\le S$ and the first statement of \Cref{lemma:apa_sim}, which together yields that for all $v\in\HS$, $-S\le -\|\vec{p}^r\|\leq \Delta_v^r$. Thus,
\begin{align*}
&H_v(p_v^r)+\Delta^r_v+T\\
\geq\,& H_v(p_v^r)-S+T\\
\geq \,&H_v(\tau_v^r)-((\vartheta+1)d-2u+(\vartheta^2+\vartheta+1)S)+T\\
\geq \,& H_v(\tau_v^r).\qedhere
\end{align*}
\end{proof}

Similarly to~\cite[Ch.~10, Lem.~10.7]{Lec21}, we can now prove that the induction steps succeeds, i.e., that if $\|\vec{p}^r\|\le S$, then also $\|\vec{p}^{r+1}\|\le S$; the details are given in~\Cref{app:proofs}.
\begin{lemma}\label{lemma:induction} Suppose that $T\geq (\vartheta^2+\vartheta+1) S+(\vartheta+1)d-2u$, $S\geq(2(2\vartheta-1)\delta+2(\vartheta-1)T)/(2-\vartheta)$, and $\|\vec{p}^r\|\leq S$. Then
\begin{itemize}
\item $(T-S)/\vartheta\leq\min_{w\in\HS}\{p_{w}^{r+1}\}-\min_{w\in\HS}\{p_{w}^{r}\}\leq T+S +\delta$ and
\item $\|\vec{p}^{r+1}\|\leq S$.
\end{itemize}
\end{lemma}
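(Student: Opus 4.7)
The plan is to combine the approximate-agreement analysis of \Cref{lemma:apa_sim} with the hardware-clock rate bound, using the properties of Algorithm~$\TCB^r$ to supply the hypotheses. First I would observe that since $\|\vec{p}^r\|\le S$, \Cref{lemma:deltabh} supplies the validity condition on $\Delta_{v,w}^r$ for honest $v,w$ required by \Cref{lemma:apa_sim}, and \Cref{lemma:deltabf} supplies the consistency condition across all (potentially faulty) senders $x$. \Cref{cor:pulses} ensures that the wait statement at the end of iteration $r$ of Algorithm~$\CPS$ is feasible at every $v\in\HS$, so $p_v^{r+1}$ is well-defined as the unique real time at which $v$'s local clock reaches $H_v(p_v^r)+\Delta_v^r+T$. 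Invoking \Cref{lemma:apa_sim} with $\|\vec{p}^r\|\le S$ then yields the two bounds $\Delta_v^r\in[-S,S+\delta]$ for every $v\in\HS$ and $\|\vec{\Delta}^r+\vec{p}^r\|\le S/2+\delta$.

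Next, the identity $H_v(p_v^{r+1})-H_v(p_v^r)=\Delta_v^r+T$ combined with the rate bound $1\le H_v'\le \vartheta$ gives
\begin{equation*}
\frac{\Delta_v^r+T}{\vartheta}\le p_v^{r+1}-p_v^r \le \Delta_v^r+T.
\end{equation*}
Applying $\Delta_v^r\ge -S$ on the left yields $p_v^{r+1}-p_v^r\ge (T-S)/\vartheta$ for every honest $v$, and hence $\min_{w\in\HS} p_w^{r+1}-\min_{w\in\HS} p_w^r\ge (T-S)/\vartheta$. The matching upper bound follows by picking any $v^{*}\in\HS$ achieving $p_{v^*}^r=\min_{w\in\HS} p_w^r$ and using $\Delta_{v^*}^r\le S+\delta$, which gives $\min_{w\in\HS} p_w^{r+1}\le p_{v^*}^{r+1}\le p_{v^*}^r+S+\delta+T$.

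For the skew, I would fix $v,w\in\HS$ maximizing $p_v^{r+1}-p_w^{r+1}$ and use the rate bound in its adversarial worst case (slowest clock for $v$, fastest clock for $w$, which is valid since $\Delta_v^r+T,\Delta_w^r+T>0$) to obtain
\begin{equation*}
p_v^{r+1}-p_w^{r+1}\le (p_v^r+\Delta_v^r)-(p_w^r+\Delta_w^r)+\left(1-\frac{1}{\vartheta}\right)(\Delta_w^r+T).
\end{equation*}
Bounding the first difference by $\|\vec{\Delta}^r+\vec{p}^r\|\le S/2+\delta$ and $\Delta_w^r$ by $S+\delta$ yields a bound of $S/2+\delta+(1-1/\vartheta)(S+\delta+T)$ on $\|\vec{p}^{r+1}\|$. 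A short algebraic manipulation (multiplying through by $2\vartheta$ and collecting terms) shows this is at most $S$ exactly when $(2-\vartheta)S\ge 2(2\vartheta-1)\delta+2(\vartheta-1)T$, which is the stated hypothesis.

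The main obstacle is this last skew step: one must correctly identify the adversary's worst case and, crucially, notice that the surviving drift term is $(1-1/\vartheta)(\Delta_w^r+T)$ rather than just $(1-1/\vartheta)T$, so that the upper bound $\Delta_w^r\le S+\delta$ furnished by \Cref{lemma:apa_sim} has to be plugged in to close the induction. Everything else is routine translation between local and real time via the clock rate bound and invocation of the already-established lemmas.
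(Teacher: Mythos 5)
Your proposal is correct and matches the paper's own proof essentially step for step: well-definedness via \Cref{cor:pulses}, the hypotheses of \Cref{lemma:apa_sim} supplied by \Cref{lemma:deltabh} and \Cref{lemma:deltabf}, the period bounds from $\Delta_v^r\in[-S,S+\delta]$ and the clock-rate bound, and the skew bound from $\|\vec{\Delta}^r+\vec{p}^r\|\le S/2+\delta$ plus the drift term $(1-1/\vartheta)(T+\Delta^r+\,\cdot\,)$, closed by exactly the stated condition on $S$. Your explicit identification of the surviving drift term as $(1-1/\vartheta)(\Delta^r+T)$ rather than $(1-1/\vartheta)T$ is precisely the point the paper's calculation also hinges on.
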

\begin{proof}[Proof Sketch.]
\Cref{lemma:apa_sim} allows us to interpret the new pulse times as the result of performing an approximate agreement step with error $\delta$ on the previous pulse times (up to adding $T$ and clock drift). The first statement follows from the resulting bound of $-S\le \|\vec{p}^r\|\le \Delta_v^r\le \|\vec{p}^r\|+\delta\le S+\delta$. The second statement is shown by using that approximate agreement reduces the ``old'' error carried over from the previous iteration by $\|\vec{p}^r\|/2$, which compensates for the ``new'' error due to the measurement error of $\delta$ and clock drift.
\end{proof}

The proof of the main theorem is analogous to~\cite[Ch.~10, Thm.~10.9]{Lec21}; we defer it to \Cref{app:proofs}.
\begin{theorem}\label{thm:sync} Suppose that $4-\vartheta+\vartheta^2-3\vartheta^3>0$ and 
\begin{align*}
T\geq\frac{(\vartheta^2+\vartheta+1)2(2\vartheta-1)(2u+(\vartheta^2-1)d)}{4-\vartheta+\vartheta^2-3\vartheta^3}+(\vartheta+1)d-2u\in O(d).
\end{align*}
Define $S$ as
\begin{align*}
S:=\frac{2(2\vartheta-1)(u+(\vartheta-1)d)+2(\vartheta-1)T}{4-2\vartheta-\vartheta^2}\in O(u+(1-1/\vartheta)T),
\end{align*}
and suppose that $\max_{v\in \HS}\{H_v(0)\}\leq S$. Then Algorithm $\CPS$ is an $(\lceil n/2\rceil-1)$-secure clock synchronizaton protocol with skew $S$, minimum period $\minP\geq (T-(\vartheta+1)S)/\vartheta$, and maximum period $\maxP\leq T+3S$.
\end{theorem}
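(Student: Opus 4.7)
The plan is to derive \Cref{thm:sync} as a direct packaging of the single-step invariant already established by \Cref{lemma:induction}: the crux is an induction over $r\in\mathbb{N}_{>0}$ showing $\|\vec{p}^r\|\leq S$, from which skew, liveness, and the two period bounds follow by routine computation. Before starting the induction I first verify algebraically that the specific choices of $S$ and $T$ in the theorem statement satisfy the two hypotheses of \Cref{lemma:induction}, namely $T\geq (\vartheta^2+\vartheta+1)S + (\vartheta+1)d - 2u$ and $S \geq (2(2\vartheta-1)\delta + 2(\vartheta-1)T)/(2-\vartheta)$ with $\delta=2u+(\vartheta^2-1)d+2(\vartheta^3-\vartheta^2)S$. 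The first inequality is immediate from the lower bound imposed on $T$. For the second, I substitute the expression for $\delta$, collect $S$-terms on one side, and observe that the resulting coefficient of $S$ is (up to a positive factor) $4-\vartheta+\vartheta^2-3\vartheta^3$; the hypothesis $4-\vartheta+\vartheta^2-3\vartheta^3>0$ ensures this coefficient is positive, so solving for $S$ yields precisely the closed form stated in the theorem.

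For the base case of the induction, $r=1$, each $v\in\HS$ generates its first pulse at the unique time $p_v^1$ with $H_v(p_v^1)=S$; combining $H_v(0)\in[0,S]$ with hardware clock rates in $[1,\vartheta]$ yields $p_v^1\in[0,S]$, so $\|\vec{p}^1\|\leq S$. For the inductive step, the assumption $\|\vec{p}^r\|\leq S$ validates the hypotheses of \Cref{lemma:deltabh} and \Cref{lemma:deltabf}, which produce estimates $\Delta_{v,w}^r$ meeting the validity and consistency conditions required by \Cref{lemma:apa_sim}; the preconditions established in the first paragraph then let me invoke \Cref{lemma:induction} to obtain both $\|\vec{p}^{r+1}\|\leq S$ and the inequality $(T-S)/\vartheta\leq\min_{w\in\HS}\{p_w^{r+1}\}-\min_{w\in\HS}\{p_w^r\}\leq T+S+\delta$. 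In parallel, \Cref{cor:pulses} guarantees that the concluding waiting step of each iteration is well-defined, so every $v\in\HS$ indeed generates every pulse, yielding liveness.

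The period bounds are then extracted from the two maintained invariants. The $S$-bounded skew is immediate. For the minimum period, combining $\min_{w\in\HS}\{p_w^{r+1}\}-\min_{w\in\HS}\{p_w^r\}\geq (T-S)/\vartheta$ with $\|\vec{p}^r\|\leq S$ gives $\min_{w\in\HS}\{p_w^{r+1}\}-\max_{w\in\HS}\{p_w^r\}\geq (T-S)/\vartheta-S=(T-(\vartheta+1)S)/\vartheta$, establishing $\minP$. For the maximum period, the symmetric computation yields $\max_{w\in\HS}\{p_w^{r+1}\}-\min_{w\in\HS}\{p_w^r\}\leq T+2S+\delta$; to convert this to $\maxP\leq T+3S$ I additionally note that the constraint $S\geq 2(2\vartheta-1)\delta/(2-\vartheta)$ derived in the first paragraph forces $\delta\leq S$ for all admissible $\vartheta$.

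I expect the main obstacle to be the arithmetic in the first step: the simplification from the form produced by substituting $\delta$ into the constraint of \Cref{lemma:induction} into the compact expression stated in the theorem is tedious, and the regime condition $4-\vartheta+\vartheta^2-3\vartheta^3>0$ must be tracked carefully so that no denominator flips sign along the way. Everything else is a mechanical consequence of the preceding lemmas, so once the constants are pinned down the theorem essentially falls out of the induction.
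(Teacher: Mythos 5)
Your proposal is correct and follows essentially the same route as the paper: verify that the stated $T$ and $S$ satisfy the preconditions of \Cref{lemma:induction} (in particular $S>\delta$), then induct on the pulse number to maintain $\|\vec{p}^r\|\leq S$ and extract the skew and period bounds exactly as you describe. The only caveat is that the closed-form algebra relating $S$, $\delta$, and the polynomial $4-\vartheta+\vartheta^2-3\vartheta^3$ is indeed the delicate part, as you anticipate.
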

\begin{proof}[Proof Sketch.]
$S$ and $T$ are chosen in accordance with the prerequisites of \Cref{lemma:induction}, from which the claims readily follow. As $T\in \Omega(S)$ and $S\in \Omega((\vartheta-1)T$, this is only feasible if $\vartheta-1$ is small enough. 
\end{proof}
By checking for which values of $\vartheta>1$ the polynomial in $\vartheta$ in the preconditions of the theorem is positive, we arrive at \Cref{cor:sync}.
\sync*
\section{Lower Bound on the Skew}\label{sec:lower}
Algorithm~$\CPS$ exploits signatures in order to prevent faulty nodes from equivocating about their own perception of time, i.e., when they broadcast. However, echoing signatures comes not only at the expense of another message delay (and thus timing uncertainty $u$), but also at the expense of adding indirection to communication relevant for the algorithm. As a result, the timing delay uncertainty determining its skew bound is actually $\tilde{u}$, the uncertainty on communication links involving a faulty party, which might be larger than $u$.

In this section, we prove that this restriction is inherent. Concretely, we show that no algorithm can achieve a skew smaller than $2\tilde{u}/3$, even if $u=0$. Our lower bound is established by the standard technique of manipulating hardware clocks in a way that is hidden by adjusting message delays, enabling us to construct three executions that are indistinguishable to honest nodes. Two executions are indistinguishable to an honest node if the sequence of received messages as well as the local reception times are identical; in this case, the algorithm must send the same messages at the same local times and generate pulses at the same time.\footnote{For a randomized algorithm, this holds only after fixing the randomness. However, the strategy of the adversary is independent of the algorithm, which allows us to prove the result for deterministic algorithms and infer the general statement by Yao's principle.} The hardware clock skew of $2\tilde{u}/3$ we build up then translates to a skew of at least $2\tilde{u}/3$ between pulses in at least one of the three executions.

In our setting, this approach faces the technical challenge that it is not sufficient to define the executions such that hardware clocks and message delays conform to the model, followed by proving their indistinguishability. In addition, we must prove that the adversary can obtain the required knowledge to determine its strategy upfront and have faulty nodes send the required messages to maintain indistinguishability in time.

We now set up to formally prove the lower bound. Fix an arbitrary deterministic pulse synchronization algorithm~$\mathcal{A}$. For convenience of notation, w.l.o.g.\ we assume that no two messages are received at the exact same local time; the general case can be covered by breaking ties lexicographically. Also w.l.o.g., we can assume $n=3$; the general case follows from a simple simulation argument we provide later. In order to distinguish the three executions we construct, we use superscripts $i\in [3]$. For notational convenience, both execution and node indices are always taken modulo~$3$, which also exposes the symmetry of the construction.

Throughout this section, all the execution triples $(\Ex^i)_{i\in [3]}$ we consider share the following properties $P$:
\begin{itemize}
\item $\HS^i=[3]\setminus\{i\}$.
\item Messages sent between honest parties have delay $d$.
\item All other messages have delay $d-\tilde{u}$.
\item $H^i_{i+1}(t)=t$.
\item $H^i_{i+2}(t)=\vartheta t$ for $t\leq 2\tilde{u}/(3(\vartheta-1))$ and $H^i_{i+2}(t)=t+2\tilde{u}/3$ for $t\geq 2\tilde{u}/(3(\vartheta-1))$.
\end{itemize}
In particular, hardware clocks of honest nodes never deviate by more than $2\tilde{u}/3$ from real time.

We construct our triple of executions by induction over the number of messages $k$ sent by faulty nodes, where indistinguishability holds before the maximum local reception time of such a message. The base case is trivial; the following lemma performs the step. 
\begin{lemma}\label{lemma:lbinduction1} Let $n=3$, $k\in\mathbb{N}$, and $\Pi$ be a $1$-secure protocol for pulse synchronization with skew $S$. Suppose there exist executions $\Ex^1,\Ex^2,\Ex^3$ of $\Pi$ satisfying $P$ and the following properties.
\begin{itemize}
\item Faulty nodes send a total of $k$ messages in $\Ex^1$, $\Ex^2$, and $\Ex^3$.
\item Let $h^*$ be the maximum local time at which a message from a faulty node is received (or $0$ if faulty nodes send no messages). Honest node $i$ cannot distinguish $\Ex^{i+1}$ and $\Ex^{i+2}$ until local time $h^*$.
\end{itemize}
If there exists a node $i$ that can distinguish $\Ex^{i+1}$ and $\Ex^{i+2}$, then there exist executions $\TEx^1$, $\TEx^2$, and $\TEx^3$ with the same properties, except that faulty nodes send an additional message which is received at a local time $h>h^*$. Moreover, for $i\in [3]$, nodes $i+1$ and $i+2$ cannot distinguish $\TEx^i$ from $\Ex^i$ before time $h$.
\end{lemma}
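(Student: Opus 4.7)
The plan is to locate the earliest local reception time $h > h^*$ at which one of the three indistinguishability relations would fail, and then insert a single faulty message whose delivery at that local time restores consistency. By the inductive hypothesis, for each honest node $j$, the local views in $\Ex^{j+1}$ and $\Ex^{j+2}$ coincide through local time $h^*$, and by determinism of honest algorithms, $j$'s outgoing message schedule in the two executions coincides through the same point. Consider the set of local reception times $>h^*$ at which some honest node would see a discrepancy between its pair of executions. The finiteness-of-messages-in-finite-time assumption together with the hypothesized distinguishability guarantee that this set has a minimum $h$; after cyclic relabeling, I assume this minimum is attained at node $i$, and that the offending message comes from a node $k\in\{i+1,i+2\}$ that is honest in one of $\Ex^{i+1},\Ex^{i+2}$ and faulty in the other (the latter I call the sibling execution).

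The content and real send time of the new message are forced by the execution in which $k$ is honest: there, $k$'s message and its send time are determined by the algorithm and its inductive state, and together they produce an arrival at $i$ at local time $h$. In the sibling execution, I insert the same message, choosing its real send time so that with the faulty-link delay $d-\tilde{u}$ and the specific $H_i$ prescribed by $P$, the arrival at $i$'s clock is exactly $h$. Since $i$'s hardware clock differs between the two executions by at most $2\tilde{u}/3$ (per $P$), the $\tilde{u}$ slack on the faulty link comfortably accommodates the inversion and yields a concrete positive real send time realizing the target arrival.

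The main obstacle is verifying well-definedness: the faulty $k$ must possess every signature from an honest node on which its new message depends, by its chosen real send time. I argue by direct comparison with the execution in which $k$ is honest, where $k$ had all required signatures by its own send time. For each signer $v$ that is honest in both executions, the link $v\to k$ has delay $d-\tilde{u}$ in the sibling versus $d$ in the other, and combined with the at-most-$2\tilde{u}/3$ clock difference at $v$ imposed by $P$, a short calculation shows each such signature reaches the faulty $k$ no later (in real time) than the new send time. For a signer $v$ that is honest in the sibling but faulty in the other execution, the indistinguishability relations preserved cyclically by prior inductive steps force honest $v$'s state in the sibling to produce and transmit the same signature in time to reach faulty $k$; equivalently, the adversary at $v$ in the other execution merely emulates this honest behavior.

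Finally, the invariants fall out immediately. Hardware clocks, the honest sets, and all pre-existing messages and their delays are untouched; the inserted message uses a valid faulty-link delay of $d-\tilde{u}$; so $P$ still holds. Two of the three executions are literally unchanged, and in the modified one only a single message with local reception time $h$ has been added. Minimality of $h$ therefore implies that for every $i\in[3]$, nodes $i+1$ and $i+2$ cannot distinguish $\TEx^i$ from $\Ex^i$ before local time $h$, and the total count of faulty messages has increased by exactly one, completing the induction step.
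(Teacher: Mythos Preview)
Your approach is essentially the paper's: locate the minimal local time $h>h^*$ at which some honest node first sees a discrepancy between its two executions, then insert the corresponding message as a faulty send in the sibling execution so that all three pairwise indistinguishabilities extend through~$h$. The one place where you are substantially thinner than the paper is the well-definedness argument, i.e., verifying that the faulty sender possesses every required honest signature by its real send time. For the signer $v$ that is faulty in the ``other'' execution but honest in the sibling, your appeal to ``indistinguishability relations preserved cyclically'' is the right idea, but it hides a chain through \emph{all three} pairwise indistinguishabilities (at the recipient, at the honest copy of the faulty sender, and at~$v$) with explicit clock-shift and delay arithmetic; the paper carries this out and the resulting margin is essentially tight (the signature arrives by real time $\bar{h}-d+\tilde{u}/3$, which exactly matches the lower bound on the faulty send time). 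Your phrase ``the $\tilde{u}$ slack on the faulty link'' is also slightly misleading: property~$P$ fixes faulty-link delay at exactly $d-\tilde{u}$, so the send time is determined rather than chosen, and the actual verification is that the three $2\tilde{u}/3$ clock shifts accumulated around the cycle are absorbed by the single $\tilde{u}$ saving on the final faulty link---which works, but only just.
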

\begin{proof} 
Let $h$ denote the minimum local time (over all executions) at which a node $i$ can distinguish $\Ex^{i+1}$ and $\Ex^{i+2}$. By the prerequisites and the assumption that no two messages are received at the same time, there exists an honest message $m$ that $i$ receives in exactly one of these executions at local time $h>h^*$. Observe that the requirements on $\TEx^1,\TEx^2,\TEx^3$ fully specify these executions save for the additional message that is sent by some faulty party. We rule that the additional message is $m$, which is sent by the same (there faulty) node with the same local reception time at $i$. We will need to establish that $\TEx^1,\TEx^2,\TEx^3$ are well-defined, in that faulty nodes always learn all signatures required for the messages they send. However, let us first establish the indistinguishability statements, as these are needed to show well-definedness.

We begin by showing that for each $j\in [3]$, executions $\TEx^{j+1}$ and $\Ex^{j+1}$ are indistinguishable before time $h$. An analogous statement holds for $\TEx^{j+2}$ and $\Ex^{j+2}$. Toward a contradiction, suppose that there exists a local time $\tilde{h}<h$ at which some node $j$ can distinguish $\TEx^{j+1},\Ex^{j+1}$. Without loss of generality, we let $\tilde{h}$ denote the minimal such time. This implies that $j$ receives a message $\tilde{m}$ at local time $\tilde{h}$ in execution $\TEx^{j+1}$, which it does not receive in $\Ex^{j+1}$ (or vice versa).

We analyze two cases:
\begin{itemize}
\item $\tilde{m}$ is sent by an honest node $j'\neq j$: As $d> 2\tilde{u}/3$, honest parties' local clocks are less than $d$ apart at all times. As messages sent between honest parties have delay $d$, thus $j'$ must have sent $\tilde{m}$ at some local time less than $\tilde{h}$. Because $\TEx^{j+1}$ and $\Ex^{j+1}$ are indistinguishable before time $\tilde{h}$, $j'$ sends $\tilde{m}$ in both $\Ex^{j+1}$ and $\TEx^{j+1}$. Hence, $\tilde{m}$ is received at the same local time by $j$ in both $\Ex^{j+1}$ and $\TEx^{j+1}$. This is a contradiction.
\item $\tilde{m}$ is sent by an faulty node $j'\neq j$: As $\tilde{m}$ is received by $j$ before local time $h$, $\tilde{m}\neq m$. It follows that $j'$ sends $\tilde{m}$ in both executions. Hence it is received at the same local time by $j$ in both $\Ex^{j+1}$ and $\TEx^{j+1}$. This is a contradiction.
\end{itemize}
Since $\Ex^{j+1}$ and $\Ex^{j+2}$ are indistinguishable to $j$ before local time $h$, it follows that $\TEx^{j+1}$ and $\Ex^{j+2}$ are also indistinguishable before local time $h$ to $j$. As, analogously, $\Ex^{j+2}$ and $\TEx^{j+2}$ are indistinguishable to $j$ before local time $h$, $\TEx^{j+1}$ and $\TEx^{j+2}$ are also indistinguishable to $j$ before local time $h$. Finally, by construction and the assumption that only one message is received at any given time, $\TEx^{j+1}$ and $\TEx^{j+2}$ are also indistinguishable to $j$ at local time $h$.

It remains to show that the behaviour of faulty nodes in executions $\TEx^{1},\TEx^{2},\TEx^{3}$ is also well-defined.
Suppose that faulty node $i\in [3]$ sends $\tilde{m}$ in $\TEx^i$. Let $\bar{h}\leq h$ be the local time when $\tilde{m}$ is received (by an honest node) in $\TEx^i$. Denote as $t^i\ge \bar{h}-2\tilde{u}/3$ the time at which $\tilde{m}$ is received in $\TEx^i$. By definition, $i$ sends $\tilde{m}$ at time $t^i-d+\tilde{u}\ge \bar{h}-d+\tilde{u}/3$ in $\TEx^i$. It is sufficient to show that $i$ receives every message $m'$ on which $\tilde{m}$ depends by this time. 

Consider such a message $m'$ and suppose that $i+1\neq i$ is its sender; the case that $i+2$ is the sender is treated analogously. We distinguish two cases based on which node receives $\tilde{m}$.
\begin{itemize}
\item $i+2$ receives $\tilde{m}$: Since $H^{i+1}_{i+2}(\bar{h})\ge \bar{h}$, and $i$ sends $\tilde{m}$ no later than time $\bar{h}-d$ in $\TEx^{i+1}$. Thus, $m'$ is received by $i$ in $\TEx^{i+1}$ no later than at time $\bar{h}-d$, which corresponds to local time at most $h^{i+1}:=H^{i+1}_i(\bar{h}-d)\leq\bar{h}-d+2\tilde{u}/3<\bar{h}$. By indistinguishability of executions before local time $h\ge \bar{h}$, $i$ receives $m'$ by the same local time $h^{i+1}$ in $\TEx^{i+2}$, and accordingly no later than (real) time $h^{i+1}$. Hence, $i+1$ sends $m'$ in $\TEx^{i+2}$ by time $h^{i+1}-d$ and local time $h^{i+2}:=H^{i+2}_{i+1}(h^{i+1}-d)\leq h^{i+1}-d+2\tilde{u}/3<\bar{h}$.
By indistinguishability of $\TEx^{i}$ and $\TEx^{i+2}$ before local time $h$, it follows that $i+1$ sends $m'$ in $\TEx^{i}$ by local time (and also time) $h^{i+2}$. Hence, in $\TEx^{i}$, $i$ receives $m'$ by time 
\begin{align*}
h^{i+2}+d-\tilde{u}&\leq h^{i+1}-\frac{\tilde{u}}{3}\leq \bar{h}-d+\frac{\tilde{u}}{3}.
\end{align*}
\item $i+1$ receives $\tilde{m}$: Since $H^{i+2}_{i+1}(\bar{h})\ge\bar{h}$, we have that $i$ sends $\tilde{m}$ by time $\bar{h}-d$ in $\TEx^{i+2}$. Thus, $m'$ is received by $i$ in $\TEx^{i+2}$ by time $\bar{h}-d$, and hence sent by $i+1$ by time $\bar{h}-2d$. This corresponds to local time $h^{i+2}:=(H^{i+2}_{i+1})(\bar{h}-2d)\leq \bar{h}-2d+2\tilde{u}/3<\bar{h}$. By indistinguishability of executions before local time $h\ge \bar{h}$, $i+1$ sends $m'$ by the same local time and time $h^{i+2}$ in $\TEx^i$. Hence, $i$ receives $m'$ at time at most 
\begin{align*}
h^{i+2}+d-\tilde{u}&\leq \bar{h}-d-\frac{\tilde{u}}{3}<\bar{h}-d+\frac{\tilde{u}}{3}.\qedhere
\end{align*}
\end{itemize}
\end{proof}
Using the \Cref{lemma:lbinduction1} inductively, we construct executions that are indistinguishable to honest nodes at all times. A full proof of \Cref{lemma:lbinduction2} is provided in the appendix.
\begin{lemma}\label{lemma:lbinduction2} Let $n=3$ and let $\Pi$ be a $1$-secure protocol for pulse synchronization with skew $S$. Then there exist executions $\Ex^1,\Ex^2,\Ex^3$ of $\Pi$ that satisfy $P$ and where node $i\in [3]$ cannot distinguish $\Ex_{i+1}$ and $\Ex_{i+2}$.
\end{lemma}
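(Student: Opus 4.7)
My plan is to iterate \Cref{lemma:lbinduction1} from a trivial starting point and then take a limit. Let $\Ex^1_0,\Ex^2_0,\Ex^3_0$ denote the executions in which faulty nodes send no messages at all. Since properties $P$ fully fix the hardware clocks of honest nodes and the delays of every honest-to-honest message, each $\Ex^i_0$ is uniquely specified and satisfies the hypothesis of \Cref{lemma:lbinduction1} with $h^*_0=0$ vacuously (no faulty message has yet been received, and indistinguishability before time $0$ is automatic).

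Inductively, given a triple $\Ex^1_k,\Ex^2_k,\Ex^3_k$ with frontier $h^*_k$ satisfying the hypotheses of \Cref{lemma:lbinduction1}, I check whether some honest node $i$ can distinguish $\Ex^{i+1}_k$ from $\Ex^{i+2}_k$. If not, these executions are already the desired triple. Otherwise \Cref{lemma:lbinduction1} produces $\Ex^1_{k+1},\Ex^2_{k+1},\Ex^3_{k+1}$ satisfying $P$, adding exactly one faulty-sender message received at some local time $h^*_{k+1}>h^*_k$, and ensuring that honest nodes $i+1,i+2$ cannot distinguish $\Ex^i_{k+1}$ from $\Ex^i_k$ before local time $h^*_{k+1}$. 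Thus the views of honest nodes in later triples agree with those in earlier triples on all prefixes already covered.

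The main obstacle is passing from this inductive construction to executions indistinguishable at \emph{all} local times. Here I invoke the model assumption that only finitely many messages are sent in any finite real-time interval. Suppose for contradiction that $h^*_k$ stayed bounded by some $H$. Each step then introduces a new faulty-sender message whose reception occurs at local time at most $H$; since honest hardware clocks advance at rate at least $1$ by $P$, the corresponding real-time window is bounded too, forcing infinitely many messages into a bounded real-time window and contradicting finiteness. Therefore $h^*_k\to\infty$.

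Finally, define $\Ex^i$ as the pointwise limit of $(\Ex^i_k)_{k\in\mathbb{N}}$: for each honest node $j\neq i$ and each local time $h$, its view up to $h$ equals its view in any $\Ex^i_k$ with $h^*_k\geq h$, which is well-defined by the consistency noted above. Properties $P$ are inherited from every stage (each is a condition on hardware clocks and delays, both of which are preserved in the limit), and the indistinguishability of $\Ex^{i+1}_k$ and $\Ex^{i+2}_k$ to $i$ up to local time $h^*_k$, together with $h^*_k\to\infty$, yields indistinguishability at every local time, as required.
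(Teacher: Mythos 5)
Your proposal is correct and follows essentially the same route as the paper's proof: iterate \Cref{lemma:lbinduction1} from the trivial triple with no faulty messages, stop if indistinguishability already holds at all times, and otherwise use the finiteness-of-messages assumption to show the frontier local times tend to infinity so that a well-defined limit triple exists. The paper additionally spells out that the faulty node's behaviour (not just the honest views) stabilizes on common prefixes, which is needed for the limit executions to be fully specified, but this is a presentational detail rather than a gap in your argument.
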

\begin{proof}[Proof Sketch.]
Inductive application of \Cref{lemma:lbinduction1} yields a series of triples of executions with an increasing number of messages sent by faulty nodes, which cannot be distinguished by correctness up to the largest local time when such a message is received. There are two cases: The induction halts, because the indistinguishability holds at all times; in this case we are done. Otherwise, the constructed executions share indistinguishable prefixes with those that are constructed later. As only finitely many messages are sent in finite time, these prefixes must become arbitrarily long. Hence, they define limit executions satisfying the required properties.
\end{proof}

Equipped with these executions, we are in the position to prove the claimed lower bound.
\lb*
\begin{proof} We show the statement for the special case of $n=3,t=1$. For general $n$, we can reduce the argument to the case of $n=3,t=1$ as follows. Assume the existence of an $\lceil n/3\rceil$-secure protocol $\Pi$ for pulse synchronization with skew $S<2\tilde{u}/3$. Now, partition the set of $n$ nodes into three non-empty subsets $S_1,S_2,S_3$ of size at most $\lceil n/3\rceil$. Then, node $i\in[3]$ simulates the protocol behaviour of nodes in $S_i$ in $\Pi$ and outputs the pulse times of the lexicographically first node in $S_i$. By assumption, this yields a $1$-secure protocol for pulse synchronization for $n=3$ with skew $S$. 

Thus, let $\Pi$ be a $1$-secure protocol for pulse synchronization, and assume for now that $\Pi$ is deterministic. By Lemma \ref{lemma:lbinduction2}, there exist executions $\Ex^{1},\Ex^{2},\Ex^{3}$ with the following properties:
\begin{itemize}
\item $\HS^i=[3]\setminus\{i\}$.
\item $H^i_{i+1}(t)=t$.
\item $H^i_{i+2}(t)=\vartheta\cdot t$ for $t\leq 2\tilde{u}/(3(\vartheta-1))$ and $H^i_{i+2}(t)=t+2\tilde{u}/3$ for $t\geq 2\tilde{u}/(3(\vartheta-1))$.
\item $\Ex_{i+1}$ and $\Ex_{i+2}$ are indistinguishable to node $i$.
\end{itemize}
Recall that $\Pi$ must guarantee some minimum period $\minP>0$. We define
$r:=\lceil\tilde{u}/\minP(\vartheta-1)\rceil+1$, such that for all $i\in[3]$, $\min\{p_{i+1}^{i,r},p_{i+2}^{i,r}\}\geq 2\tilde{u}/(3(\vartheta-1))$, where $p_v^{i,r}$ denotes the $r$th pulse time of honest node $v$ in execution $\Ex^i$.
Recall that for all times $t$, $H^{i+2}_i(t)=t$ and for times $t\geq p_i^{i+1,r}$, $H^{i+1}_i(t)= t+2\tilde{u}/3$. By indistinguishability of executions $\Ex^{i+1}$ and $\Ex^{i+2}$ for node $i$, $H_i^{i+1}(p_i^{i+1,r})=H_i^{i+2}(p_i^{i+2,r})$. Hence,
\begin{align*}
p_i^{i+1,r}&=(H_i^{i+1})^{-1}(H_i^{i+1}(p_i^{i+1,r}))\\
&=(H_i^{i+1})^{-1}(H_i^{i+2}(p_i^{i+2,r}))\\
&=(H_i^{i+1})^{-1}(p_i^{i+2,r})=p_i^{i+2,r}-\frac{2\tilde{u}}{3}.
\end{align*}
We conclude that
\begin{align*}
3S&\geq (p_{i+1}^{i,r}-p_{i+2}^{i,r})+(p_{i+2}^{i+1,r}-p_{i}^{i+1,r})+(p_{i}^{i+2,r}-p_{i+1}^{i+2,r})\\
&=(p_i^{i+2,r}-p_{i}^{i+1,r})+(p_{i+1}^{i,r}-p_{i+1}^{i+2,r})+(p_{i+2}^{i+1,r}-p_{i+2}^{i,r})\\
&=2\tilde{u}.
\end{align*}
This implies that $S\geq 2\tilde{u}/3$.

It remains cover the case that $\Pi$ is randomized. To this end, we interpret $\Pi$ as random variable evaluating to a deterministic protocol (as a result of fixing the randomness of the nodes). Independently of $\Pi$, the adversary uniformly at random picks node $i\in [3]$ to corrupt. These choices determine executions $\Ex^i$, $i\in [3]$, as above. The adversary then lets $i$ behave such that $\Ex^i$ is realized.

We claim that, while the adversary might not be able to learn the randomness of nodes $i+1$ and $i+2$, it is not required to do so. The messages node $i$ sends are those it would send as a correct node in $\Ex^{i+1}$ and $\Ex^{i+2}$, respectively. It receives the same messages in $\Ex^i$, just at different times. However, from the reception times it can compute the corresponding local reception times at $i$ in $\Ex^{i+1}$ and $\Ex^{i+2}$, based on the known message delays and local times of the respective senders in the respective executions. Hence, it can simply simulate two copies of $\Pi$ at $i$ with the randomness of $i$, to which it feeds the received messages with the local times at which they are received in $\Ex^{i+1}$ and $\Ex^{i+2}$, respectively. As we have shown that all messages required in the adversary's simulation are received early enough to produce all appropriate messages to be sent in time, we conclude that the adversary can indeed realize $\Ex^i$.

Denote by $p_v^{i,r}(\pi)$ the $r$th pulse of $v$ in the execution $\Ex^i$ of deterministic protocol $\pi$. Using independence, we get that
\begin{align*}
3\cdot\mathbb{E}[S]&=3\cdot\sum_{i=1}^3 P[i \mbox{ is corrupted}]\cdot\mathbb{E}[S\,|\,i \mbox{ is corrupted}]\\
&=\sum_{i=1}^3 \mathbb{E}[S\,|\,i \mbox{ is corrupted}]\\
&\ge \sum_{i=1}^3 \sum_{\pi} P[\Pi=\pi]\cdot (p_{i+1}^{i,r}(\pi)-p_{i+2}^{i,r}(\pi))\\
&=\sum_{\pi} P[X=\pi]\cdot \sum_{i=1}^3 p_{i+1}^{i,r}(\pi)-p_{i+2}^{i,r}(\pi))\\
&\ge \sum_{\pi} P[X=\pi]\cdot 2\tilde{u}=2\tilde{u}.\qedhere
\end{align*}
\end{proof}

\bibliographystyle{plain} 
\bibliography{references.bib}

\appendix

\section{Further Related Work}\label{app:further}

\medskip\noindent\textbf{Cryptography in Distributed Consensus.} Cryptography makes it possible to circumvent the famous $f<n/3$ bound due to Lamport, Shostak, and Pease~\cite{LSP82} on the number of tolerable corruptions in a synchronous protocol solving the consensus problem. This was first demonstrated in the seminal work of Dolev and Strong~\cite{DS83} who gave an authenticated algorithm for Byzantine broadcast tolerating any $f<n-1$ corrupted parties. Early works in the area typically use cryptography, e.g., signatures, as structure-free objects, in line with their symbolic treatment in the Dolev-Yao model~\cite{DY83}. Cryptography has also been a staple tool to facilitate randomized consensus protocols, which are inherently required in the asynchronous setting due to the FLP result~\cite{FLP85}. Here, cryptography is typically used in the form of more advanced primitives such as secret sharing~\cite{CR93}, multi-party computation~\cite{BKLL20}, or threshold signatures~\cite{CKS00}, which can be used to agree on a random coin flip. Other advanced forms of cryptography have been (more recently) used to optimize the round and communication complexity of consensus protocols. Verifiable random functions produce efficiently verifiable, yet unpredictable values that can be used to efficiently elect a random leader (or a random subcommittee) in a consensus protocol~\cite{Micali17,ACDNPS19,CPS20,CKS20}. Threshold encryption allows to agree on a ciphertext of unknown content which can be forcibly decrypted by a sufficient number of honest parties. This has served as a useful tool for guaranteeing liveness in asynchronous consensus protocols~\cite{MXCSS16}. Recent work~\cite{WXDS20} has also shown the feasibility of round efficient consensus with a strongly adaptive adversary capable of after-the-fact-removal of messages sent by honest parties using time-locked puzzles. Lastly, we mention the work of Abraham et al.~\cite{ADDNR19} who give randomized consensus protocols from primitives such as the above and also provide a signature-based variant of the Srikanth-Toueg pulse synchronizer~\cite{ST85}. We point out, however, that their algorithm, while tolerating the optimal corruption fraction of $\lceil n/2\rceil -1$ Byzantine parties, exhibits a skew of up to $d$, whereas we achieve $O(\tilde{u}+\vartheta-1)d)$. Likewise,~\cite{HSSD84} achieves skew larger than $d$ for the related task of clock synchronization using essentially the same scheme.

\medskip\noindent\textbf{Fault-tolerant Clock Synchronization in General Networks.} Little work has directly addressed fault-tolerant synchronization in non-complete networks. For a known topology, without signatures (node) connectivity $(2f+1)$ is necessary and sufficient to simulate full connectivity in the presence of up to $f$ faults~\cite{D82}, lifting results for consensus under full connectivity to general networks. Analogously, $(2f+1)$-connectivity can be leveraged to simulate full connectivity with suitable timing information to do the same for clock synchronization, by taking the median of clock offset estimation obtained via $2f+1$ node-disjoint paths. This comes at the expense of such estimates being subject to the maximum over the used paths of the uncertainty accumulated along the path, which is justified by~\cite{BW01}, which shows for any pair of nodes a lower bound proportional to the length of the shortest path between them in the fault-free setting. Although we are not aware of a formal proof of this claim, $(2f+1)$-connectivity is also necessary for synchronization, since a majority of faulty nodes on a node cut of the network allows for the faulty majority to claim arbitrarily large clock offsets between the (thus effectively disconnected) parts of the network.

In light of the lower bound of $\Omega(uD)$ on the worst-case skew\footnote{We state all bounds here for uniform link delays and uncertainties, but all bounds can be generalized to he heterogenous case.} in fault-free networks of diameter $D$~\cite{BW01}, Fan and Lynch proposed to study the task of gradient clock synchronization, in which the goal is to keep the local skew, i.e., the skew between neighbors in the network, small~\cite{FL04}. Matching upper and lower bounds of $\Theta(u\log D)$ on the local skew have been proven for the fault-free case~\cite{LLW10}. A fault-tolerant generalization of the algorithm achieves the same asymptotic skew bounds in the presence of $f$ faults, provided that the (arbitrary connected) base network is augmented by copying nodes and links $\Theta(f)$ times~\cite{BLR19}. The existing gradient clock synchronization algorithms require no knowledge of the topology.

In the setting with signatures, $(f+1)$-connectivity is trivially necessary and sufficient to simulate full connectivity of the network. This allows to carry over the existing algorithms with skew $\Theta(d)$ to this setting~\cite{ADDNR19,HSSD84}, where $d$ becomes the worst-case end-to-end delay across the network after deleting the faulty nodes (\cite{HSSD84} in fact states the general result). Our algorithm can be translated to any known $(f+1)$-connected network in the same way, where $\tilde{u}$ and $d$ are replaced by the maximum end-to-end delay and uncertainty over all paths used to simulate full connectivity. Note that in addition to making sure that communication has stable latency on the link level, one needs to balance the length (in terms of overall delay) of the utilized paths in order to keep $\tilde{u}$ much smaller than $d$.

\section{Ommitted Proofs}\label{app:proofs}

\begin{proof}[Proof of~\Cref{lemma:deltabh}] Due to Lemma \ref{lem:correctscb1}, $\Delta_{v,w}^r\neq\bot$. Denote as $t_w$ the time that the dealer $w$ sends its message and let $t_v\in [t_w+d-u,t_w+d]$ denote the time that $v$ receives it. We bound $\Delta_{v,w}^r$ from below as 
\begin{align*}
\Delta_{v,w}^r&=H_v(t_v)-H_v(p_v^r)-d+u-S\\
&\geq t_v-p_v^r-d+u-S\\
&\ge t_w-p_v^r-S\\
&=t_w+p_w^r-p_w^r-p_v^r-S\\
&\geq \frac{H_w(t_w)-H_w(p^r_w)}{\vartheta}+p_w^r-p_v^r-S= p_w^r-p_v^r.
\end{align*}
The upper bound is derived as
\begin{align*}
&\Delta_{v,w}^r\\
=\,&H_v(t_v)-H_v(p_v^r)-d+u-S\\
\leq \,&\vartheta(t_v-p_v^r)-d+u-S\\
\le\,&\vartheta(t_w+d-p_v^r)-d+u-S\\
=\,&p_w^r-p_v^r+u+\vartheta(t_w-p_w^r)+(\vartheta-1)(d+p_w^r-p_v^r)-S\\
\le \,&p_w^r-p_v^r+u+\vartheta(t_w-p_w^r)+(\vartheta-1)d+(\vartheta-2)S\\
\le \,&p_w^r-p_v^r+u+\vartheta(H_w(t_w)-H_w(p_w^r))+(\vartheta-1)d+(\vartheta-2)S\\
=\,& p_w^r-p_v^r+u+(\vartheta-1) d +(\vartheta^2+\vartheta-2) S<p_w^r-p_v^r+\delta.\qedhere
\end{align*}
\end{proof}

\begin{proof}[Proof of \Cref{lemma:induction}.]
We begin by noting that due to \Cref{cor:pulses} (which we can apply due to the lemma conditions), $\vec{p}^{r+1}$ is well-defined. Next, we apply \Cref{lemma:deltabf,lemma:deltabh} to show that the preconditions of \Cref{lemma:apa_sim} hold for pulse $r$. Thus, by the first statement of~\Cref{lemma:apa_sim}, for all $v\in \HS$ we have that
\begin{align*}
-S\le -\|\vec{p}^r\|\leq \Delta_v^r\le \|\vec{p}_r\|+\delta\le S+\delta.
\end{align*}
Consider $v:=\arg \min_{w\in\HS}\{p^{r+1}_w\}$. Then 
\begin{align*}
\min_{w\in\HS}\{p_{w}^{r+1}\}-\min_{w\in\HS}\{p_{w}^{r}\}&\geq p_v^{r+1}-p_v^r\\
&\geq \frac{H_v(p_v)^{r+1}-H_v(p_v^{r})}{\vartheta}\\
&\geq \frac{T+\Delta^r_v}{\vartheta}\\
&\geq \frac{T-S}{\vartheta}.
\end{align*}
This establishes the lower bound in item 1. For the upper bound, consider $v:=\arg \min_{w\in\HS}\{p^{r}_w\}$. We get
\begin{align*}
\min_{w\in\HS}\{p_{w}^{r+1}\}-\min_{w\in\HS}\{p_{w}^{r}\}&\leq p^{r+1}_v-p^r_v\\
&\leq H_v(p_v^{r+1})-H_v(p_v^{r})\\
&=T+\Delta_v^r\\
&\leq T+S+\delta.
\end{align*}

For the second statement of the lemma, fix arbitrary $v,w\in\HS$ and assume w.l.o.g.\ that $p_w^{r+1}\geq p_v^{r+1}$. We have that
\begin{align*}
&p_w^{r+1}-p_v^{r+1}\\
=\,&p_w^r+p_w^{r+1}-p_w^r-(p_v^r+p_v^{r+1}-p_v^r)\\
\le \,&p_w^r+H_w(p_w^{r+1})-H_w(p_w^r)-\left(p_v^r-\frac{H_v(p_v^{r+1})-H_v(p_v^r)}{\vartheta}\right)\\
=\,&p_w^{r}+\Delta_w^r-(p_v^{r} +\Delta_v^r)+\left(1-\frac{1}{\vartheta}\right)(T+\Delta_v^r)\\
\leq \,&p_w^{r}+\Delta_w^r-(p_v^{r} +\Delta_v^r)+\left(1-\frac{1}{\vartheta}\right)(T+S+\delta)\\
\leq\,&\frac{\|\vec{p}^r\|}{2}+\delta+\left(1-\frac{1}{\vartheta}\right)(T+S+\delta)\\
\leq\,&\frac{S}{2}+\delta+\left(1-\frac{1}{\vartheta}\right)(T+S+\delta)\le S,
\end{align*}
where the second inequality uses the already established bound $\Delta_v^r\leq S+\delta$, the third inequality follows from the second statement of \Cref{lemma:apa_sim}, the second to last inequality applies the precondition $\|\vec{p}^r\|\leq S$, and the final inequality holds due to the precondition on $S$. Since $v,w\in \HS$ were arbitrary, we conclude that $\|\vec{p}_{r+1}\|\le S$.
\end{proof}

\begin{proof}[Proof of~\Cref{thm:sync}.]
We begin by observing that 
\begin{align*}
S&=\frac{2(2\vartheta-1)(2u+(\vartheta^2-1)d)+2(\vartheta-1)T}{2-\vartheta+\vartheta^2-\vartheta^3}\\
&=\frac{2(2\vartheta-1)\delta+2(\vartheta-1)T}{2-\vartheta}>\delta
\end{align*}
and 
\begin{align*}
T&\geq\frac{(\vartheta^2+\vartheta+1)2(2\vartheta-1)(2u+(\vartheta^2-1)d)}{4-\vartheta+\vartheta^2-3\vartheta^3}+(\vartheta+1)d-2u\\
&=(\vartheta^2+\vartheta+1) S+(\vartheta+1)d-2u.
\end{align*}
We prove the theorem by induction on the pulse number $r$, where the induction hypothesis is that $\|\vec{p}^r\|\leq S$ and if $r\neq 0$ also
\begin{align*}
\min_{w\in\HS}\{p_{w}^{r+1}\}-\max_{w\in\HS}\{p_{w}^{r}\}&\geq \frac{T-(\vartheta+1)S}{\vartheta}\quad\mbox{and}\\
\max_{w\in\HS}\{p_{w}^{r+1}\}-\min_{w\in\HS}\{p_{w}^{r}\}&\leq T+3S.
\end{align*}
The base case holds due to the condition of the theorem. For the step case, assume that for $r\in\mathbb{N}$, $\|\vec{p}^r\|\leq S$. We invoke \Cref{lemma:induction}, yielding that $\|\vec{p}^{r+1}\|\leq S$ and that $(T-S)/\vartheta\leq\min_{w\in\HS}\{p_{w}^{r+1}\}-\min_{w\in\HS}\{p_{w}^{r}\}\leq T+S +\delta$. Thus,
\begin{align*}
\min_{w\in\HS}\{p_{w}^{r+1}\}-\max_{w\in\HS}\{p_{w}^{r}\}&\geq\min_{w\in\HS}\{p_{w}^{r+1}\}-\min_{w\in\HS}\{p_{w}^{r}\}+S\\
&\geq {T-(\vartheta+1)S}{\vartheta}
\end{align*}
and 
\begin{align*}
\max_{w\in\HS}\{p_{w}^{r+1}\}-\min_{w\in\HS}\{p_{w}^{r}\}&\leq\min_{w\in\HS}\{p_{w}^{r+1}\}-\min_{w\in\HS}\{p_{w}^{r}\}+S\\
&\leq T+2S+\delta<T+3S.\qedhere
\end{align*}
\end{proof}

\pprotocol{Algorithm $\CB$}{A $2$-round synchronous algorithm for Crusader Broadcast.}{fig:cb}{t!}{
\begin{itemize}
\item The dealer $v$ sends $(b_v,\sig{b_v}{v})$ to all nodes.
\item Let $(b,\sigma)$ be the value received from the dealer. Send $(b,\sigma)$ to all nodes.
\item Let $(b_w,\sigma_w)$ be the message received from node $w$. Output $\bot$ if either of the following occurs:
\begin{itemize}
\item There are nodes $w_1\neq w_2$ such that $\ver(pk_v,\sigma_{w_1},b_{w_1})=\ver(pk_v,\sigma_{w_2},b_{w_2})=1$, but $b_{w_1}\neq b_{w_2}$.
\item $\ver(pk_v,\sigma,b)=0.$
\end{itemize}
Otherwise, output $b$.
\end{itemize}
}

\begin{proof}[Proof of \Cref{lemma:lbinduction2}.]
We show the statement by inductively constructing triples of executions $(\Ex_k^1,\Ex^k_2,\Ex_k^3)$ for $k\in\mathbb{N}$ satisfying $P$ and the following properties:
\begin{itemize}
\item Faulty nodes send a total of $k$ messages in $\Ex^1_k$, $\Ex^2_k$, and $\Ex^3_k$.
\item Let $h_k$ be the maximum local time at which a message from a faulty node is received (or $0$ if faulty nodes send no messages). Honest node $i$ cannot distinguish $\Ex^{i+1}_k$, $\Ex^{i+2}_k$, $\Ex^{i+1}_{k+1}$, and $\Ex^{i+2}_{k+1}$ until time $h_k$.
\item Honest node $i$ cannot distinguish $\Ex^{i+1}_k$ and $\Ex^{i+2}_k$ until time $h_k$.
\end{itemize}
Note that these conditions imply all preconditions of Lemma~\ref{lemma:lbinduction1}, except for the existence of a node $i\in[3]$ that can distinguish executions $\Ex^{i+1}$ and $\Ex^{i+2}$ after local time $h_k$. We begin by noting that we can define $(\Ex_0^1,\Ex^0_2,\Ex_0^3)$ with the necessary properties by having faulty nodes send no messages, setting message delays to $d$, and defining the functions $H^i_j$ for all $i,j\in[3]$ as stated above. This covers the base case of our induction. 

Our induction stops at finite index $k_0$ if there does not exist a node $i\in[3]$ which can distinguish executions $\Ex_{k_0}^{i+1}$ and $\Ex_{k_0}^{i+2}$. For all $k$ not of this form, observe that if $\Ex^1_k$, $\Ex^2_k$, and $\Ex^3_k$ with the above properties exist, then all preconditions of Lemma~\ref{lemma:lbinduction1} are met. Hence, we can apply Lemma~\ref{lemma:lbinduction1} for the step case of our induction, which yields $\Ex_{k+1}^{1},\Ex_{k+1}^2$ and $\Ex_{k+1}^{2}$ which satisfy the properties stated above. This concludes the induction.

To conclude the proof, we distinguish two cases:
\begin{itemize}
\item The induction halts at finite index $k_0$: Then $\Ex_{k_0}^1$, $\Ex_{k_0}^2$, and $\Ex_{k_0}^3$ satisfy the requirements of the lemma.
\item The induction does not halt: We note that for any two indices $k_1<k_2$ and $i\in[3]$, executions $\Ex_{k_1}^i$ and $\Ex_{k_2}^i$ are indistinguishable up to local time $h_{k_1}$ in the view of nodes $i+1$ and $i+2$. Recall that this means that nodes $i+1$ and $i+2$ send and receive all messages in these executions at the same local times (before local time $h_{k_1}$). This also means that these nodes send and receive the same messages at the same local times before time $t_{k_1}:=\min_{j\in\{i+1,i+2\}}(H^{i}_{j})^{-1}(h_{k_1})\geq h_{k_1}-2\tilde{u}/3$. Also, this implies that the faulty node $i$ in these executions sends and receives the same messages before time $t_{k_1}-d+\tilde{u}= h_{k_1}-d+\tilde{u}/3$. Thus, we see that for any $k$, executions $\Ex_{k}^i,\Ex_{k+1}^i,...$ are identical until time $t_k-d+\tilde{u}/3$. By our model assumptions, nodes send a finite amount of messages over any finite period of time. This implies that $t_k$ grows unbounded as $k$ increases. Hence, there is a well-defined limit execution $\Ex^i$ which for all $k\in\mathbb{N}$ is identical to $\Ex_{k}^i$ before time $t_k$. These executions satisfy the claim of the lemma.\qedhere
\end{itemize}
\end{proof}

\end{document}